\documentclass[pdflatex,sn-mathphys-num]{sn-jnl}%

\usepackage{graphicx}%
\usepackage{multirow}%
\usepackage{amsmath,amssymb,amsfonts}%
\usepackage{amsthm}%
\usepackage{mathrsfs}%
\usepackage[title]{appendix}%
\usepackage{xcolor}%
\usepackage{textcomp}%
\usepackage{manyfoot}%
\usepackage{booktabs}%
\usepackage{algorithm}%
\usepackage{algorithmicx}%
\usepackage{algpseudocode}%
\usepackage{listings}%

\usepackage{subcaption} 

\usepackage{xcolor}

\usepackage{comment}

\newcommand{\C}{\mathcal{C}}

\newtheorem{lemma}{Lemma}

\theoremstyle{thmstyleone}%
\newtheorem{theorem}{Theorem}
\newtheorem{proposition}[theorem]{Proposition}%

\theoremstyle{thmstyletwo}%

\theoremstyle{thmstylethree}%
\newtheorem{definition}{Definition}%

\begin{document}

\title[Exact Multiple Change-Point Detection Via Smallest Valid Partitioning]{Exact Multiple Change-Point Detection Via Smallest Valid Partitioning}


\author*[1]{\fnm{Vincent} \sur{Runge}}\email{vincent.runge@univ-evry.fr}

\author[2]{\fnm{Anica} \sur{Kostic}}\email{kostic.ani@gmail.com}

\author[1]{\fnm{Alexandre} \sur{Combeau}}\email{alexandre.combeau.77@gmail.com}

\author[3]{\fnm{Gaetano} \sur{Romano}}\email{g.romano@lancaster.ac.uk}

\affil*[1]{\orgdiv{Laboratoire de Mathématiques et Modélisation d'Evry}, \orgname{Université Paris-Saclay, CNRS, Univ Evry}, \city{Evry-Courcouronnes}, \postcode{91037}, \country{France}}

\affil[2]{\orgdiv{Department of Statistics}, \orgname{London School of Economics and Political Science},
  \orgaddress{\street{Columbia House, Houghton Street}, \city{London}, \postcode{WC2A 2AE}, \country{United Kingdom}}}

\affil[3]{\orgdiv{School of Mathematical Sciences}, \orgname{Lancaster University},
  \orgaddress{\street{Fylde College}, \city{Lancaster}, \postcode{LA1 4YF}, \country{United Kingdom}}}


\abstract{We introduce smallest valid partitioning (SVP), a segmentation method for multiple change-point detection in time-series. SVP relies on a local notion of segment validity: a candidate segment is retained only if it passes a user-chosen validity test (e.g., a single change-point test). From the collection of valid segments, we propose a coherent aggregation procedure that constructs a global segmentation which is the exact solution of an optimization problem. Our main contribution is the use of a lexicographic order for the optimization problem that prioritizes parsimony. We analyze the computational complexity of the resulting procedure, which ranges from linear to cubic time depending on the chosen cost and validity functions, the data regime and the number of detected changes. Finally, we assess the quality of SVP through comparisons with standard optimal partitioning algorithms, showing that SVP yields competitive segmentations while explicitly enforcing segment validity. The flexibility of SVP makes it applicable to a broad class of problems; as an illustration, we demonstrate robust change-point detection by encoding robustness in the validity criterion. }

\keywords{Multiple Change-Point Detection, Dynamic Programming, Pruning, Lexicographic Order}


\maketitle

\section{Introduction}\label{sec1}

In time series analysis, it is often common to find piecewise stationary time series, where the behavior of the time series changes abruptly, and the sequence is partitioned into consecutive segments where the data exhibit similar properties. In this case, we might wish to identify the underlying sequence of changes based on the observed data. The study of this problem has a long-standing history, with substantial developments since the mid-twentieth century, particularly within the communities of sequential analysis \cite{ghosh1991handbook, siegmund2013sequential, lai2001sequential} and change-point detection \cite{series1994change, brodsky2013nonparametric, csorgo1997limit, chen2000parametric}. Two distinct challenges can be identified. The first is single change-point detection, which involves identifying a change within an online data stream. The second is multiple change-point detection, which aims to partition a fixed-length time series into one or more consecutive segments. In this work, we propose a new strategy that aggregates single change-point detections into a global segmentation by minimizing a well-defined objective function by lexicographic order. For simplicity, we focus on the canonical setting of detecting abrupt changes in a location (centrality) parameter of the data-generating distribution, such as the mean or the median. We call a segment valid if no change is detected within it by the underlying single-test procedure (although this notion can be generalized). Constructing a global segmentation by aggregating single-change tests is a standard paradigm in offline change-point detection \cite{Fryzlewicz2014WBS,frick2014multiscale}; our contribution is to formalize this aggregation within a precise optimization framework.\\

There are several algorithmic strategies to deal with the multiple-changes scenario. In particular, Dynamic Programming solutions have gained popularity within the last two decades, as they provide an optimal segmentation by exactly minimizing a global cost function -- typically formulated as a penalized likelihood. Two pioneering approaches in this context are the optimal partitioning (OP) algorithm \cite{jackson2005algorithm} and the segment neighborhood (SN) method \cite{auger1989algorithms}. These methods have been substantially refined in recent years to reduce their quadratic time complexity \cite{maidstone2017optimal, rigaill2015pruned} and to handle more complex data structures, such as constraints on successive segment means \cite{JSSv106i06} or changes in slope under continuity constraints \cite{fearnhead2019detecting}. Model complexity is generally controlled through a penalty term, which helps select a realistic number of change points. This penalty can either be integrated directly into the segmentation algorithm \cite{yao1988estimating} or applied afterward through a model selection step, using multiple candidate segmentations of varying lengths \cite{lebarbier2005detecting, verzelen2023optimal} often generated by the SN algorithm. In all these approaches, calibrating the penalty remains a significant challenge; while it offers global control over the overall segmentation, the local quality of fit within each segment is merely a consequence of the chosen calibration.

In sequential analysis or online detection, a single change-point can be identified using various methods, such as CUSUM \cite{page1954continuous, romano2022detecting} and its approximations \cite{chen2020highdimensional, meier2021mosum}. Various strategies have been introduced to extend single-change approaches to deal with multiple changes. By applying single change-point detection repeatedly on different data segments, one can reconstruct a multiple change-point structure using the binary segmentation (BS) algorithm and its variants \cite{scott1974cluster, Fryzlewicz2014, kovacs2023seeded}. Although this algorithm operates in quasi-linear time, the resulting segmentation is not guaranteed to be the exact minimizer of a global optimization problem. A multiscale change-point procedure called SMUCE \cite{frick2014multiscale} achieves a similar aggregation as SVP, decoupling the validity of the single test step from the aggregation step (no use of lexicographic order). In their work, the validity test is a precise single change-point procedure that tests all sub-segments, and the solution is the same constrained maximum likelihood as SVP, but only under some conditions (see Lemma 3.1. in \cite{frick2014multiscale}). \\

With our new method, we aim to leverage the fine-grained control of single change-point detection within a global segmentation procedure. The Smallest Valid Partitioning (SVP) algorithm addresses this challenge by filtering candidate segments through a validity test, which rejects overly large segments that are likely to contain a change. We introduce a novel objective function for multiple change-point detection that explicitly incorporates this validity criterion. By aggregating local single-segment tests, the SVP algorithm constructs a global segmentation that exactly minimizes this objective function. To limit the number of segments and avoid over-segmentation, we use a lexicographic minimization strategy that balances the preference for smaller segments (enforced by the validity test) with the goal of selecting as few segments as possible. To our knowledge, this approach is the first method to use the lexicographic order to systematically integrate single and multiple change-point detection in a coherent, ordered framework. This positions SVP as a complementary algorithmic approach to well-established OP and SN algorithms.\\

The paper is organized as follows. Section~\ref{sec:problem} introduces the optimization problem, along with the concepts of segment cost and segment validity. In Section~\ref{sec:DP}, we present the exact dynamic programming solution called the SVP algorithm, its link to the OP method, and an analysis of its time complexity. Section~\ref{sec:validity_tests} specifies our approach to some precise single-change validity tests (in particular, using CUSUM FOCuS \cite{romano2022detecting}). We evaluate our method on simulated data with changes in mean patterns and compare its performance to OP under various validity test functions in Section~\ref{sec:simu}. We combine SVP with change-in-median validity tests (Wilcoxon and Mood) and show that it delivers promising performance compared with a robust optimal-partitioning (OP) baseline. Finally, in Section~\ref{sec:data}, an application to the well-log data demonstrates the realism of the segmentations produced by SVP.

\section{Problem description}
\label{sec:problem}
\subsection{Segment cost and segment validity}
Given a univariate time series \( (y_t)_{t \ge 1}\) of length $n$, we define a segment $y_{a..b}$ as the consecutive data points \( y_{a+1}, \ldots, y_b \) for integers $a < b$, and its associated segment cost is given by a value \(\C (y_{a..b}) \). A segmentation of the indices \( \{1, \ldots, n\} \) is denoted by \( \tau = \{0 = \tau_0, \tau_1, \ldots, \tau_{K-1}, \tau_K = n\} \), where the \( \tau_k \) are integers in increasing order, and \( K \) is the number of segments. The overall cost of a segmentation is given by $Q_n(\tau;y) = \sum_{k=0}^{K-1} \mathcal{C}(y_{\tau_k..\tau_{k+1}})$.\\

The cost function is often built on the negative log-likelihood of the data model. If data point $y_t$ is drawn from a distribution with density $p(y_t, \cdot)$, we obtain $\C(y_{a..b})=\min_\theta\{-\sum_{t=a+1}^b\log p(y_t |\theta)\}$. For instance, with the segment mean $\overline y_{a..b}=\frac1{b-a}\sum_{i=a+1}^by_i$, we have:
\begin{equation}
\label{eq:exp}
\C^{\text{Gauss}}(y_{a..b}) = \frac{1}{2} \sum_{i = a+1}^b (y_i - \overline{y}_{a..b})^2\,, \quad \C^{\text{Poisson}}(y_{a..b}) = (b-a) \overline{y}_{a..b} (1 - \log(\overline{y}_{a..b}))\,,
\end{equation}
which are the segment costs for the Gaussian and Poisson models, respectively. When the underlying model belongs to the exponential family, the cost of a segment can be computed in constant time, regardless of its length, provided that the cumulative statistic vector has been precomputed, i.e. having $\mathtt{cs}(y)_s = \sum_{t=1}^{s} y_t$, $s=1,\ldots,n$. Other cost measures can be used to quantify segment variability around a measure of centrality. The median absolute deviation captures dispersion around the segment median,
$$
\C^{\text{MAD}}(y_{a..b}) = \sum_{i=a+1}^{b} \big| y_i - \text{median}(y_{a..b}) \big|\,,
$$
whereas a quantile-based cost
$$
\C^{\text{quant}}(y_{a..b}; x) = q_{1-x}(y_{a..b}) - q_x(y_{a..b}),
$$
is a robust measure of within-segment spread, with $2x$ representing the fraction of data ignored at the extremes. Setting $x=0$ recovers the range cost. Changes in data trends can also be seen as part of the change-point framework. In this case, segmenting based on linear fits is natural, and a suitable cost is the residual sum of squares from a linear regression \cite{fearnhead2019detecting}. Typically, the objective is also to tend toward small values of the segment cost $\mathcal{C}(y_{\tau_k..\tau_{k+1}})$.\\

In this work, the selection of a valid segmentation $\tau^\star$ will consist of considering only segments that pass a validity test. To formalize this notion, we define a valid set \( \mathcal{V}(\gamma) \subset \mathbb{R}^v \), whose size is controlled by a user-defined parameter \( \gamma \in \mathbb{R}^v \). We also introduce a function \( F_l: \mathbb{R}^l \to \mathbb{R}^v \), and we require that each segment $y_{a..b}$ of the returned segmentation satisfies the constraint \( F_{b - a}(y_{a..b}) \in \mathcal{V}(\gamma) \). The function \( F_l \) is also user-defined. In our context, this is naturally a single change-point detection test. The segment would be rejected and considered non-valid as soon as a change is detected. We propose several instances of such functions in Section~\ref{sec:validity_tests}, whose impact on segmentation quality will be evaluated in the simulation study in Section~\ref{sec:simu}. For simplicity, we choose $v=1$ and now write, by a slight abuse of notation, $f(y_{a..b}) \le \gamma$ for a valid segment. The choice of the validity test $f$ will strongly determine the types of changes detected. 

\subsection{Optimization problem}

The usual approach is often as follows. If the number of segments, $K$, is known, change-point detection methods involve selecting a segmentation vector $\tau^\star$ that minimizes the value $Q_n(\tau;y)$ among the $\binom{n-1}K$ possible candidates with $K$ segments. If $K$ is a free parameter, we obtain $2^{n-1}$ candidates. A model selection criterion is then used to choose a good partition with some statistical guaranties \cite{yao1988estimating, lebarbier2005detecting, verzelen2023optimal}. To minimize $Q_n(\tau;y)$, the former approach has an exact solution using a dynamic programming method called Segment Neighborhood \cite{auger1989algorithms} (SN); the latter employs another dynamic programming method called Optimal Partitioning \cite{jackson2005algorithm} (OP). 

We propose a new approach by defining the bi-point $R_n = (K_n(\tau), Q_n(\tau;y))$ in $\mathbb{N}^\star \times \mathbb{R}$, where \( Q_n(\tau;y) \) is still the overall cost and \( K_t(\tau) \) is the number of segments in the segmentation $\tau$.

Given a finite set $\mathcal{R}$ of elements in $\mathbb{N}^\star \times \mathbb{R}$, we define two types of minima. The first minimum, denoted $\min_K \mathcal{R}$, returns the subset of elements in $\mathcal{R}$ whose number of segments is minimal (first element). Similarly, the second minimum, denoted $\min_{Q} \mathcal{R}$, returns the subset of elements whose global cost is minimal (second element). Using these notations, the optimization problem for finding the best segmentation -- the Smallest Valid Partition -- can be formulated as follows :
\begin{equation}
\label{eq:optim}
R_n=\min_Q\min_K\Bigg\{\Big(K,\sum_{k=0}^{K-1}\C(y_{\tau_{k}..\tau_{k+1}})\Big)\,,\, f(y_{\tau_{k}..\tau_{k+1}})\le\gamma,\ k=0,\dots,K-1\Bigg\}\,.
\end{equation}

The set $\mathcal{R}$ of available solutions (with $R_n=\min_Q\min_K \mathcal{R}$) contains elements with $K_{\min}$ segments, $K_{\min}+1$ segments, $K_{\min}+2$ segments, and so on, where $K_{\min}$ denotes the minimal possible number of segments. Intuitively, due to the validity constraints that reject large segments, $K_{\min}$ will generally be greater than $1$, with the exception of sequences where no changes are present and for very large values of $\gamma$. The operator \( \min_K \) selects all solutions with \( K_{\min} \) segments, and \( \min_Q \) then chooses the one with the best goodness of fit throughout the function cost $\C$ among them. The constraint $f(y_{\tau_{k}..\tau_{k+1}})\le\gamma$ is applied to all segments of the partition and ensures that none of the selected segments contain a change. If one decreases the parameter \( \gamma \), the validity condition becomes sharper, leading to an increase in the number of segments. Thus, $\gamma$ serves as a trade-off parameter, playing the role of the standard model selection step described above.

The two minima over the tuples (bi-points) naturally introduce a lexicographic ordering of the tuples. For this reason, in what follows, the notation $\min_Q \min_K$ will be replaced by $\min_{\preceq}$, defined as follows:
$$(K,Q)\preceq(K',Q') \iff \Big\{K<K'\quad\text{or}\quad(K=K'\text{ and }Q\le Q')\Big\}\,,$$
where $(K,Q)$ and $(K',Q')$ are in $\mathbb N^\star\times\mathbb R$. The minimum of the empty set is always $+\infty$ ($\min_{\preceq}\emptyset=+\infty$).

\section{Adaptive dynamic programming}
\label{sec:DP}

\subsection{Sequential optimization}

Our optimization problem \eqref{eq:optim} can be solved by dynamic programming (DP) techniques. The resulting estimator is adaptive in the sense that the number of segments is inferred from the partition algorithm, rather than being fixed a priori or selected through an external model-selection step applied after a partition has been computed.

In change-point detection, the DP recursion evaluates, for each $t$, the optimal way to segment the prefix data $y_{0..t}$ by locating the starting index of the last segment. By saving, for every $t$, both the minimal value of the objective (an element of $\mathbb{N}^\star \times \mathbb{R})$ and the corresponding argmin, one can reconstruct an optimal partition for \eqref{eq:optim} via backtracking. This strategy is summarized in the following proposition.

\begin{proposition}
\label{prop:update}
For a given integer $t$ and $1 \le r < t$, we consider the best solution $R_r=(K_r, Q_r)$ with $Q_r = Q_r(\tau) = \sum_{k=0}^{K_r-1}\C(y_{\tau_{k}..\tau_{k+1}})$ for segmenting data $y_{0..r}$ of the optimization problem~\eqref{eq:optim} as known. Based on these values, we can easily obtain the next solution $R_t$ by solving:
\begin{equation}
\label{eq:DP_SVP}
R_t=\min_{\substack{\preceq\\ 0\le s<t}}\Bigg\{R_s+(1,\C(y_{s..t}))\,,\,f(y_{s..t})\le\gamma\Bigg\},
\end{equation}
with the initial condition $R_0=(0,0)$ and where the minimum is considered in lexicographic order ($\preceq$) with $t$ elements.
\end{proposition}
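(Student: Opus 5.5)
The plan is a standard Bellman-type argument, adapted to the lexicographic order $\preceq$ on $\mathbb{N}^\star\times\mathbb{R}$ (extended with $(0,0)$ and $+\infty$). The ingredient that is not automatic is that $\preceq$ is compatible with addition: if $(K,Q)\preceq(K',Q')$, then $(K,Q)+(1,c)\preceq(K',Q')+(1,c)$ for any $c\in\mathbb{R}$. This holds because adding the same pair shifts both coordinates equally. It preserves strict inequality of the first coordinates, and it preserves $Q\le Q'$ when the first coordinates are equal. We will also use that $\preceq$ is a total preorder, so $\min_\preceq$ of a finite nonempty set is well defined up to ties. Ties share the same bi-point value, which is all that $R_t$ records.

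We write $\mathcal{R}_t$ for the set of bi-points $(K,Q_t(\tau;y))$ over all valid segmentations $\tau$ of $y_{0..t}$, meaning that every segment satisfies $f\le\gamma$. By definition $R_t=\min_\preceq\mathcal{R}_t$, and we set $\mathcal{R}_0=\{(0,0)\}$. The key step is the decomposition
\[
\mathcal{R}_t=\bigcup_{\substack{0\le s<t\\ f(y_{s..t})\le\gamma}}\Big(\mathcal{R}_s+(1,\C(y_{s..t}))\Big).
\]
Both inclusions come from splitting off the last segment. A valid segmentation of $y_{0..t}$ whose last change is $s=\tau_{K-1}$ restricts to a valid segmentation of $y_{0..s}$ and has a valid last segment $y_{s..t}$. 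Its count is the restricted count plus one, and its cost is the restricted cost plus $\C(y_{s..t})$. Conversely, any valid segmentation of $y_{0..s}$ extended by a valid segment $y_{s..t}$ is a valid segmentation of $y_{0..t}$. The case $s=0$ is covered by $\mathcal{R}_0=\{(0,0)\}$.

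Taking $\min_\preceq$ of a finite union gives the $\min_\preceq$ over $s$ of the minima of the pieces. It remains to show $\min_\preceq(\mathcal{R}_s+(1,c))=\min_\preceq\mathcal{R}_s+(1,c)=R_s+(1,c)$, and this follows directly from the compatibility property. Empty cases are handled by the convention $\min_\preceq\emptyset=+\infty$ with $+\infty+(1,c)=+\infty$. If some $\mathcal{R}_s$ is empty, it contributes nothing. If no admissible $s$ exists, both sides equal $+\infty$. This yields \eqref{eq:DP_SVP}.

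The main subtlety is this exchange of $\min_\preceq$ with translation, since $\min_Q\min_K$ is not a real-valued minimum. One must check that minimizing the number of segments first and then the cost commutes with adding the same $(1,c)$. Arguing directly via the compatibility of $\preceq$, rather than treating the two minima separately, avoids any issue. Finally, storing an argmin $s$ for each $t$ and backtracking recovers an optimal segmentation, because each optimal bi-point is attained by a concatenation of an optimal prefix and a valid last segment.
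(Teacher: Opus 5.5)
Your proof is correct and follows the same route as the paper's Appendix~A argument. Both partition the feasible segmentations by the position $s$ of the last change, split off the last segment, and pull the common increment $(1,\C(y_{s..t}))$ out of the inner minimum to recognize $R_s$; your explicit check that translation by $(1,c)$ preserves $\preceq$, together with your handling of empty sets via $+\infty$, justifies the exchange step that the paper performs implicitly in its fourth equality.
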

The proof is given in Appendix~\ref{app:DP}. The backtracking step is standard and widely documented in the literature.

The minimum can be computed efficiently by exploiting the properties of the lexicographic order. Specifically, we examine the elements of~\eqref{eq:DP_SVP} in increasing order of the number of segments and disregard all elements with a larger number of segments once the first valid element is found. Formally, we write: 
$$\mathcal{R}_k =  \Big\{(K_s,Q_s)+(1,\C(y_{s..t}))\,,\,0\le s<t\,,\, K_s = k\,,\,f(y_{s..t})\le\gamma\Big\}\,,$$
and get
$$R_t = \min_{\preceq}\bigcup_{k \ge 0}\mathcal{R}_k\,.$$
Notice that $\mathcal{R}_0$ is the empty set by convention. The value $R_t$ is the minimum of the first non-empty $\mathcal{R}_k$ set:
\begin{equation}
\label{eq:DP_SVP2}
R_t = \min_{\preceq} \mathcal{R}_K \,,\quad \text{where}\quad \mathcal{R}_0 =\mathcal{R}_1 = \ldots = \mathcal{R}_{K-1} = \emptyset\,,\, \mathcal{R}_K \ne \emptyset\,.  
\end{equation}

We define the following notion that will be used further.
\begin{definition}
We say that the validity function $f$ is $\gamma$-stable if it has the following property:
$$f(y_{s..t})>\gamma\implies f(y_{s..u})>\gamma\,,\,\forall u>t\,.$$
\end{definition}
An easy example of $\gamma$-stable function is the range cost function. This property is used for pruning indices in an obvious way. A similar definition is necessary for proving classic pruning rules such as PELT. 

\begin{definition}
We say that the validity function $f$ is $\gamma^{-1}$-stable if it possesses the following property:
$$f(y_{t..u})>\gamma\implies f(y_{s..u})>\gamma\,,\,\forall s<t\,.$$
\end{definition}

The $\gamma$-stability can be enforced by design into the validity test. To do so, we consider the test written as $f_{\gamma}(y_{s..t}) \le \gamma$, meaning that $F_{t-s}(y_{s..t}) = (f(y_{s..s+1}), f(y_{s..s+2}),\ldots, f(y_{s..t})) \in (-\infty,\gamma]^{t-s}$. One obvious property of the $\gamma$ stability is its pruning power. Any $s$ such that $f_{\gamma}(y_{s..t}) > \gamma$ can be removed from the minimization, i.e., $R_s+(1,\C(y_{s..t}))$ is no longer an element to consider in~\eqref{eq:DP_SVP}. This pruning, coupled with the minimization strategy~\eqref{eq:DP_SVP2}, leads to an efficient algorithm (see Section~\ref{subsec:algo}). We can find $R_t$ even more easily with~\eqref{eq:DP_SVP2} when the number of segments, $K_t$, is an increasing function. In that case, the sets $\mathcal{R}_k$ contain the elements $(K_s,Q_s) + (1,\C(y_{s..t}))$ in increasing order of index $s$. We have the following important result. 
\begin{lemma}
\label{lem:increasing}
Function $t\mapsto K_t$ is increasing if function $f$ is $\gamma$-stable. 
\end{lemma}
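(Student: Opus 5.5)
We will show that $K_t \le K_{t+1}$ for every $t$, which gives that $t\mapsto K_t$ is non-decreasing ("increasing" in the weak sense used here). We argue directly on optimal segmentations: take an optimal (feasible) segmentation of $y_{0..t+1}$ and restrict it to a segmentation of $y_{0..t}$ that is still feasible and has no more segments. Minimality of $K_t$ then yields the inequality. We assume $R_{t+1}$ is finite, i.e.\ a feasible segmentation of $y_{0..t+1}$ exists; otherwise there is nothing to compare, with the convention $\min_{\preceq}\emptyset=+\infty$.

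The steps, in order, are as follows. First, let $\tau=\{0=\tau_0<\tau_1<\dots<\tau_{K}=t+1\}$ be a segmentation achieving $R_{t+1}$, so that $K=K_{t+1}$ and every segment satisfies $f(y_{\tau_k..\tau_{k+1}})\le\gamma$. Only its first component matters here, since $\min_K$ is applied first in~\eqref{eq:optim}. Second, let $s=\tau_{K-1}<t+1$ be the start of the last segment, and split into two cases.
\begin{itemize}
\item If $s=t$, the last segment is the single point $y_{t+1}$. Dropping it gives a feasible segmentation of $y_{0..t}$ with $K-1$ segments, so $K_t\le K-1<K_{t+1}$.
\item If $s<t$, replace the last segment $y_{s..t+1}$ by $y_{s..t}$ and keep all other segments unchanged. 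This gives a segmentation of $y_{0..t}$ with $K$ segments. Its earlier segments are valid by assumption. The new last segment is valid by the contrapositive of $\gamma$-stability: if $f(y_{s..t})>\gamma$, then $f(y_{s..t+1})>\gamma$ (take $u=t+1>t$), contradicting the validity of $y_{s..t+1}$. Hence $K_t\le K=K_{t+1}$.
\end{itemize}
Third, conclude $K_t\le K_{t+1}$ for all $t$, and extend to all $t<t'$ by induction or telescoping.

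The only delicate point is the second case. It is exactly where $\gamma$-stability is used, and we must check that shortening a valid segment from the right preserves validity, which is the stated implication read in contrapositive. A secondary subtlety is feasibility. If no valid segmentation of $y_{0..t}$ existed while one of $y_{0..t+1}$ did, the construction above would produce one anyway, so feasibility also propagates backwards and the comparison is well defined whenever $R_{t+1}<+\infty$.

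As a sanity check, the same conclusion could be read from the recursion~\eqref{eq:DP_SVP} of Proposition~\ref{prop:update}. Any admissible index $s<t$ in the minimization for $R_{t+1}$ is also admissible for $R_t$ by $\gamma$-stability, and the index $s=t$ contributes $K_t+1$. Hence the first non-empty $\mathcal{R}_k$ for $t+1$ has index at least that for $t$. We would present the direct segmentation argument and mention this as a remark.
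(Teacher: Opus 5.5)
Your proof is correct and follows essentially the paper's argument. The paper truncates an optimal segmentation of $y_{0..t}$ directly at an arbitrary $s<t$ and uses $\gamma$-stability to show the shortened segment $y_{\tau_{k^\star-1}..s}$ is valid, whereas you truncate one step at a time and telescope; your separate treatment of the case where the cut lands on an existing change point (dropping the last segment rather than forming an empty one) handles a degenerate case that the paper's proof passes over.
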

\begin{proof}
Consider one of the values $R_t = (K_t,Q_t)$. Its associated segmentation is $(\tau_0,\ldots, \tau_{K_t})$, and for any index $s$ smaller than $t$, there exists $k^\star$ such that $\tau_{k^\star-1} \le s < \tau_{k^\star}$. The bi-point $$\Big(k^\star-1,\sum_{k=1}^{k^\star-1}\C(y_{\tau_{k-1}..\tau_k})\Big) + (1,\C(y_{\tau_{k^\star-1}..s}))$$
has $k^\star \le K_t$ segments. If the data segment $y_{\tau_{k^\star-1}..s}$ is valid, then this bi-point is one of the available values in the lexicographic minimum, and thus $K_s \le k^\star \le K_t$. Can we have $y_{\tau_{k^\star-1}..s}$ invalid? If so, we have $f(y_{\tau_{k^\star-1}..s}) > \gamma$, and then by $\gamma$-stability, we also have $f(y_{\tau_{k^\star-1}..\tau_{k^\star}}) > \gamma$. However, this latter segment is valid as it is one of the valid segments of solution $R_t$. This proves the validity of segment $y_{\tau_{k^\star-1}..s}$ and thus the lemma.
\end{proof}

\subsection{Link to the OP method}

A natural validity test is one based on the cost function:
$$f_{OP}(y_{a..b}) = \max_{a<u<b} \{\C(y_{a..b}) - (\C(y_{a..u}) + \C(y_{u..b}))\}\,,$$, which is derived directly from the GLR test (see its link with Equation \eqref{eq:validity_LR} below). When $f_{OP}(y_{a..b}) > \gamma$ there exists $u^\star$ such that $\C(y_{a..b}) >\C(y_{a..u^\star}) + \C(y_{u^\star..b}) + \gamma$. In the OP framework, where we optimize the quantity $\sum_{k=0}^{K-1} \left\{\mathcal{C}(y_{\tau_k..\tau_{k+1}}) + \gamma\right\}$, a segment satisfying $f_{OP}(y_{\tau_k..\tau_{k+1}}) > \gamma$ is under-optimal, and replacing $\C(y_{\tau_k..\tau_{k+1}}) + \gamma$ with $(\C(y_{\tau_{k}..u^\star}) + \C(y_{u^\star..\tau_{k+1}})) + 2\gamma$ is preferable.
\begin{proposition}
\label{prop:OP_SVP}
If we choose the GLR validity test \eqref{eq:validity_LR}, then the obtained segmentation contains fewer segments than the one obtained by optimal partitioning (OP) with a penalty value~$\gamma$. We write $K_n^{SVP} \le K_n^{OP}$.
\end{proposition}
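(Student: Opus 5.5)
Since SVP minimises the number of segments first over all valid segmentations, it is enough to show that the OP segmentation is itself valid for $f_{OP}$. Let $\tau^{OP}=(0=\tau_0<\dots<\tau_{K}=n)$ be an optimal partition for the penalised objective $\sum_{k=0}^{K-1}\{\C(y_{\tau_k..\tau_{k+1}})+\gamma\}$, with $K=K_n^{OP}$. If every segment of $\tau^{OP}$ satisfies $f_{OP}(y_{\tau_k..\tau_{k+1}})\le\gamma$, then $\tau^{OP}$ is an admissible element of the set in \eqref{eq:optim}. The operator $\min_K$ returns the minimal number of segments among admissible partitions, so $K_n^{SVP}=K_{\min}\le K=K_n^{OP}$. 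The lexicographic order does the rest, and Proposition~\ref{prop:update} is not needed, because we compare at the level of the global problem \eqref{eq:optim} rather than through the recursion.

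The key step is the validity of each OP segment, which I would prove by contradiction using the exchange argument sketched just before the statement. Suppose some segment $y_{\tau_k..\tau_{k+1}}$ has $f_{OP}(y_{\tau_k..\tau_{k+1}})>\gamma$. By definition of $f_{OP}$ as a maximum over the finitely many $u$ with $\tau_k<u<\tau_{k+1}$, there is a $u^\star$ with
$$\C(y_{\tau_k..\tau_{k+1}})>\C(y_{\tau_k..u^\star})+\C(y_{u^\star..\tau_{k+1}})+\gamma.$$
Insert $u^\star$ into $\tau^{OP}$. All other segments are unchanged, and the penalised cost changes by
$$\C(y_{\tau_k..u^\star})+\C(y_{u^\star..\tau_{k+1}})+2\gamma-\C(y_{\tau_k..\tau_{k+1}})-\gamma<0.$$
This strictly improves the OP objective and contradicts the optimality of $\tau^{OP}$. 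Hence every OP segment is valid.

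The main obstacle is a set of technical edge cases rather than the core idea. First, a segment of length one has no admissible $u$. I would adopt the convention that the maximum over an empty set is $-\infty$, so such segments are valid (provided $\gamma$ is finite); this also guarantees that the admissible set in \eqref{eq:optim} is nonempty. Second, OP may have several minimisers with different numbers of segments. The argument applies to every minimiser, since any of them is valid, so the inequality holds whichever one the OP algorithm returns. Third, I would check that the GLR validity test \eqref{eq:validity_LR} referenced in the statement coincides with $f_{OP}$ up to the same threshold $\gamma$, meaning the same cost $\C$ and the same penalty scale. If it differs by a monotone transformation or a constant factor, the threshold must be matched accordingly, and I would state that identification explicitly before running the exchange argument.
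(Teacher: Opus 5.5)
Your proof is correct and is essentially the paper's argument: the exchange step (inserting $u^\star$ strictly lowers the penalised cost) shows every OP segment is valid, so the OP partition is admissible in \eqref{eq:optim} and $\min_K$ gives $K_n^{SVP}\le K_n^{OP}$; your edge cases are extra care the paper omits. The identification you defer is immediate: since $\C(y_{a..b})=-\ell(\hat\theta_{a..b};y_{a..b})$ and the $\tau=s$ term of \eqref{eq:validity_LR} is zero (empty first segment), one gets $f_{\mathrm{LR}}=\max(0,f_{OP})$, which defines the same valid segments whenever $\gamma\ge 0$.
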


\begin{proof}
We consider the segmentation obtained by the optimal partitioning algorithm and one of its segments written as $y_{a..b}$. If its cost is such that there exists $u$, $a<u<b$, such that $\C(y_{a..b}) > \C(y_{a..u}) + \C(y_{u..b}) + \gamma$, then a better segmentation exists, which contradicts the optimality of the OP result. This means that all segments of the OP result are valid segments. Thus, the OP partition is one element of the optimization problem~\eqref{eq:optim}. The SVP algorithm returns the one with the minimal number of segments, meaning that $K_n^{SVP} \le K_n^{OP}$.
\end{proof}

\subsection{SVP Algorithm}
\label{subsec:algo}

The Small Valid Partitioning Algorithm~\ref{algo:svp} has the same structure as the other change-point detection algorithms based on dynamic programming. A condition line $5$ is used to filter the available indices that are linked to a valid segment. The smallest value is found by lexicographic order instead of considering the smallest global cost (SN) or the penalized global cost (OP). We added a generic pruning step for completeness in line $13$. The algorithm returns a matrix of elements; on the $t$-th row, the element $(R(t),S(t))$ is a triplet containing the values $(Q_t,K_t,s_t)$ where $s_t$ is the index of the last change point. A standard backtracking procedure is used to recover the optimal segmentation under the SVP problem~\eqref{eq:optim}.

\begin{algorithm}[ht!]
\caption{\texttt{SVP} algorithm}\label{algo:svp}
\begin{algorithmic}[1]
\State $\mathbf{R}(0) = (0,0)$, $\mathbf{S}(0) = 0$, $\tau \leftarrow (0)$\Comment{Initialization}
\For{$t=1\text{ to }n$} \Comment{Loop over time}
\State $R^\star \leftarrow (+\infty, +\infty)$
        \For{$s \in \tau$} \Comment{Loop over accessible last change indices}
        \If{$f(y_{s..t}) \le \gamma$}  
         $R = \mathbf{R}(s) + (1,\C(y_{s..t}))$  \Comment{Validity test}
         \If{$R \preceq R^\star$} $R^\star \leftarrow R$, $s^\star \leftarrow s$ \Comment{Lexicographic order}
        \EndIf
        \EndIf
    \EndFor
    \State $\mathbf{R}(t) \leftarrow R^\star$, $\mathbf{S}(t) \leftarrow s^\star$ \Comment{Save the smallest value and best index}
     \State $\tau \leftarrow (\tau, t)$
     \For{$s \in \tau$} \Comment{Pruning}
        \If{$\text{pruning}(s) = \text{TRUE}$}  
         $\tau \leftarrow \tau \setminus \{s\}$
        \EndIf
    \EndFor
\EndFor
\State \textbf{return} $(\mathbf{R},\mathbf{S})$
\end{algorithmic}
\end{algorithm}

The time complexity of this algorithm is cubic, considering that the computation of $f(y_{s..t})$ and $\C(y_{s..t})$ is linear with respect to segment length (i.e. $\mathcal{O}(t-s)$).  However, the cost value is obtained in constant time for all costs derived from a distribution in the exponential family (see \eqref{eq:exp}). Furthermore, the update from value $f(y_{s..t})$ to value $f(y_{s..(t+1)})$ is often possible in constant or logarithmic time. When the validity function is also $\gamma$-stable, we can reach a better bound with time complexities ranging from linear to quadratic, up to logarithmic terms. The following proposition presents such bounds, assuming that the $K_t$ sequence is known.

\begin{proposition}
\label{prop:time}
We consider that the vector $(K_1,\ldots,K_n)$ contains $n_k$ times the value $k$ for $k$ between $1$ and $K$ with $n_K > 0$. We have $\sum_{k=1}^K n_k = n$ and define $n_0 = 1$. This means that the obtained segmentation contains $K$ segments. The time for updating the validity test from index $a$ to $b$ is expressed as $T_f(b-a)$. For a $\gamma$-stable validity function with a constant per iteration update (to compute the value $f(y_{s..(t+1)})$ from $f(y_{s..t})$ so that $T_f(b-a) = \mathcal{O}(b-a)$), the time complexity $T(n)$ of the SVP algorithm is bounded by the relations:
$$T(n) \le  \mathcal{O}\Big(\sum_{k=1}^{K} \{n_{k-1}n_k +T_f(n_{k-1} + n_k)n_{k-1}\}\Big) \le \mathcal{O}(n^2)\,.$$

Considering that the segmentation is regular, $n_k \approx n/K$, with a number of changes proportional to the data length, $K \approx \alpha n, (0<\alpha < 1)$, we obtain a linear complexity, $T(n) = \mathcal{O}\Big(\frac{3n}{\alpha}\Big)$.
\end{proposition}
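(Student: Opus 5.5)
We prove the bound by counting, at each time $t$, how many candidate indices $s$ survive in $\tau$, and charging each surviving $s$ the cost of keeping $f(y_{s..t})$ up to date. By Lemma~\ref{lem:increasing}, $\gamma$-stability makes $t\mapsto K_t$ nondecreasing. The times $t$ with $K_t=k$ therefore form a contiguous block $B_k$ of length $n_k$. We set $B_0=\{0\}$, so that $n_0=1$.

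\textbf{Step 1 (which indices can be optimal).} Fix $t\in B_k$. By the decomposition~\eqref{eq:DP_SVP2}, $R_t$ is the lexicographic minimum of $\mathcal{R}_{k-1}+(1,\cdot)$, the first nonempty set. So only indices $s$ with $K_s=k-1$, that is $s\in B_{k-1}$, can realize the minimum.

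\textbf{Step 2 (pruning).} Indices $s\in B_{j}$ with $j\le k-2$ would give $K_t\le j+1<k$ if they were valid at $t$. Hence $f(y_{s..t})>\gamma$, and by $\gamma$-stability $f(y_{s..u})>\gamma$ for all $u\ge t$. These indices can be pruned permanently, which is the pruning rule in line~13. Since the $K_s$ are sorted, the algorithm scans $\tau$ in increasing $s$ and can stop at the first valid set. At time $t\in B_k$ the live candidates therefore lie in $B_{k-1}\cup B_k$. Among them, only those in $B_{k-1}$ need to be compared, and those in $B_k$ only need their validity statistic maintained.

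\textbf{Step 3 (counting).} The comparison cost over block $B_k$ is at most $n_{k-1}n_k$, since there are $n_k$ times and at most $n_{k-1}$ relevant $s$ each. Each $s\in B_{k-1}$ stays alive at most from its creation until the end of $B_k$. This is a window of length at most $n_{k-1}+n_k$. With constant-time updates, the total validity-update cost of $s$ is $T_f(n_{k-1}+n_k)$. Summing over the $n_{k-1}$ such $s$ gives the second term. Summing over $k$ yields the first inequality.

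For the $\mathcal{O}(n^2)$ bound, use $n_{k-1}n_k\le n_{k-1}n$ and $T_f(n_{k-1}+n_k)n_{k-1}=\mathcal{O}(2n\,n_{k-1})$, together with $\sum_k n_{k-1}\le n+1$. For the regular case, substitute $n_k=n/K$ and $K=\alpha n$, so each block has length $1/\alpha$. The sum is then $K\bigl((1/\alpha)^2+(2/\alpha)(1/\alpha)\bigr)=3n/\alpha$, since $K/\alpha^2=n/\alpha$.

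\textbf{Main obstacle.} The delicate point is Step 2: justifying that an index from an earlier block $B_j$ with $j\le k-2$ is dead forever, rather than merely not chosen. This follows because validity at $t$ would contradict $K_t=k$, and $\gamma$-stability propagates the invalidity to all later times.

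A second point needs checking. The live index $s\in B_{k-1}$ may already have been created during $B_{k-1}$ and been updated there. That update work is covered by the same window length $n_{k-1}+n_k$, so it is not counted twice. For indices in $B_k$ seen during $B_k$, the update work is charged to the next block's term.
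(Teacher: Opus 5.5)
Your proposal is correct and follows the paper's proof. The paper uses Lemma~\ref{lem:increasing} and the first-nonempty-set structure \eqref{eq:DP_SVP2} to show that only the blocks with $k-1$ and $k$ segments are active at a time $t$ with $K_t=k$; this gives $n_{k-1}n_k$ comparisons, plus a validity update over a window of length at most $n_{k-1}+n_k$ for each of the $n_{k-1}$ indices of the previous block. Your $\gamma$-stability argument that older indices are pruned permanently, and your arithmetic for the $\mathcal{O}(n^2)$ and $3n/\alpha$ bounds, make explicit what the paper leaves implicit.

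One point should be stated cleanly, as the paper does. Indices in the current block $B_k$ must not be touched at all during $B_k$, rather than ``maintained''; your ``stop at the first valid set'' scan already guarantees this. Otherwise the indices of the last block $B_K$ are charged to a nonexistent $(K+1)$-th term, and for example the no-change case $K=1$ would cost $\Theta(n^2)$ instead of the claimed $\mathcal{O}(n)$.
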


\begin{proof}
Combining $\gamma$ stability -- the result of Lemma~\ref{lem:increasing} -- with the organization of the lexicographic order~\eqref{eq:DP_SVP2}, we have, at a given time $t$, only two available numbers of segments, say $k-1$ and $k$. For the $n_{k-1}$ indices with $k-1$ segments (in $(K_t)_{t\ge 1}$), we compute the minimum, as long as it is accessible by the validity test; that is, for generating all the $n_{k}$ values with $k$ segments. This gives $n_{k-1}n_k$ values to be computed. As we do not consider values with $n_k$ segments for the update, they are also not updated for the validity test when time increases. Looking one step earlier, we also have to update the validity test of the solutions with $k-1$ segments. At most, we missed a segment of size $n_{k-1} + n_k$ with update time $T_f(n_{k-1} + n_k)$, and this update is repeated $n_{k-1}$ times.
\end{proof}

Notice that $\gamma$-stable functions with constant-time updates exist; the range cost is an obvious example. The important CUSUM test update is logarithmic in expectation (see Section~\ref{sec:validity_tests}). We can adapt the pruning rules of PELT \cite{killick2012optimal} and FPOP \cite{maidstone2017optimal} to discard some of the indices in \eqref{eq:DP_SVP}. These rules are confined to the elements with the same number of segments, with the validity function satisfying $\gamma^{-1}$ stability. We present the PELT rule in Appendix~\ref{app:pruning}; its pruning capacity is, in practice, very low, even in the presence of changes, due to the fact that we are comparing partitions with the same number of segments.

\section{Single-change validity functions}\label{sec:validity_tests}

We present two categories of validity functions to illustrate the flexibility of the SVP method. A cost-based validity test -- such as CUSUM -- is a natural choice when building a single change-point procedure and aggregating it into a global segmentation. However, SVP also allows decoupled strategies, where the validity test is not required to match the segmentation cost function. To introduce robustness to outliers, we therefore consider rank-based statistics, namely the Wilcoxon test and Mood’s median test, and evaluate their performance within the SVP approach.

\subsection{Cost-based FOCuS tests for exponential family models}
\label{sec:focus}

A natural parametric choice for the validity function is the generalized likelihood-ratio (GLR) test for a single change.  
Let $y_{s..t}$ be a segment, and assume observations follow an exponential family in natural form with density
\[
p(y_t \mid \theta)
=
\exp\!\left\{
\langle T(y_t), \theta \rangle
- A(\theta)
+ B(y_t)
\right\},
\]
where $T(y_t)$ is the sufficient statistic, $\theta$ is the natural parameter,
$A(\theta)$ a convex function from $\mathbb{R}^p$ to $\mathbb{R}$, and $B(y_t)$ a function of the data. For such models, the generalized likelihood–ratio statistic for testing no change versus one change at $\tau$ in $y_{s..t}$ is

\begin{equation}
\label{eq:validity_LR}
f_{\mathrm{LR}}(y_{s..t})
=
\max_{\tau \in \{s,\dots,t-1\}}
\Big\{
\ell(\hat{\theta}_{s..\tau}; y_{s..\tau})
+
\ell(\hat{\theta}_{\tau..t}; y_{\tau..t})
\Big\}
-
\ell(\hat{\theta}_{s..t}; y_{s..t}),
\end{equation}
where $\ell(\theta; y_{u..v}) = \sum_{i=u+1}^v
\log p(y_i \mid \theta)$ and $\hat{\theta}_{u..v}$ is the MLE on segment $y_{u..v}$.

In the univariate Gaussian mean-shift case with known variance, $y_t \sim \mathcal{N}(\mu,1)$, this reduces to the classical CUSUM statistic \cite{yu2023note}. Writing the pre- and post-change means as $(\mu_0,\mu_1)$, the GLR becomes
$$
f_{\mathrm{LR}}(y_{s..t})
=
\underset{\substack{\tau\in\{s,\dots,t-1\}\\ \mu_0,\mu_1\in\mathbb{R}}}{\max}
\left\{
- \frac{1}{2}\sum_{i=s+1}^{\tau}(y_i-\mu_0)^2
- \frac{1}{2}\sum_{i=\tau+1}^{t} (y_i-\mu_1)^2
\right\}$$
$$
\quad \quad \quad \quad \quad \quad \quad \quad -
\underset{\mu\in\mathbb{R}}{\max}
\left\{
- \frac{1}{2}\sum_{i=s+1}^{t} (y_i-\mu)^2
\right\}.
$$

Computing this statistic requires maximizing both the split point $\tau$ and the segment means. While the optimization over $(\mu_0,\mu_1)$ is available in closed form, $\tau$ must be searched over all positions $s,\dots,t-1$. Consequently, a naïve evaluation of $f_{\mathrm{LR}}(y_{s..t})$ takes $\mathcal{O}(t-s)$ operations, and using this within SVP yields a total complexity of $\mathcal{O}(n^3)$.

The FOCuS algorithm \cite{romano2023fast} addresses this bottleneck by providing an exact sequential update of the GLR: the statistic on $y_{s..t}$ can be obtained directly from its value on $y_{s..(t-1)}$ and the new observation $y_t$. This reduces the amortized per-update cost to $\mathcal{O}(\log (t-s))$ in expectation, and the expected complexity of SVP to $\mathcal{O}(n^2 \log n)$. Moreover, the same acceleration applies to a broad class of exponential-family costs \cite{ward2024constant} and low-dimensional multivariate settings \cite{pishchagina2025online} with a $\log^p(n)$ instead of $\log(n)$, without compromising statistical optimality. 

\subsection{Non-parametric robust validity test}
\label{subsec:robust}

We define a rank-based cost using the centered Wilcoxon scan \cite{wilcoxon1945individual, pettitt1979non, gerstenberger2018robust}. Given a segment $y_{s..t}$, for any split point $u\in\{s+1,\dots,t-1\}$, the centered Wilcoxon statistic is
\[
W_u(s,t) \;=\;  \sum_{i = s+1}^u\sum_{j = u + 1}^t (I \{y_i \le y_j\} - 1/2)
\qquad
\text{and}
\quad
f_{W}(y_{s..t})=\max_{u=s+1,\dots,t-1}\,|W_u(s,t)|.
\]

When updating $f_W(y_{s..t})$ to $f_W(y_{s..(t+1)})$, all ranks may change; therefore, the update generally requires checking all split points $u$, and we get at least $\mathcal{O}(t-s)$ time. This validity function primarily detects a change in distribution within the segment, but it is also sensitive to a location shift, demonstrating robustness to outliers.\\

In simulations, we set $\gamma = 1.5\,\sqrt{\frac{\ell^{3}}{12}}$, where $\ell$ is a typical segment length (e.g., the oracle average $\ell = n/K$ for $K$ change points). This choice matches the null scale of the centered Wilcoxon scan since $Var(W_u) = \mathcal{O}(\frac{l^3}{12})$, and thus $\mathrm{sd}(W_u)$ is of order $\ell^{3/2}$; the factor $1.5$ is a practical calibration for the maximization over split points.\\

We also consider a median-based scan using Mood's median test \cite{brown1951median}. 
Given a segment $y_{s..t}$ of length $\ell=t-s$, let $\tilde y$ be the pooled median of $\{y_{s+1},\dots,y_t\}$. 
For any split point $u\in\{s+1,\dots,t-1\}$, define the $2\times 2$ table counting observations below/above the median in the two subsegments,
\[
N_{1}^{-}(u)=\#\{i\le u:\, y_i\le \tilde y\},\quad N_{1}^{+}(u)=\#\{i\le u:\, y_i> \tilde y\},
\]
\[
N_{2}^{-}(u)=\#\{i>u:\, y_i\le \tilde y\},\quad N_{2}^{+}(u)=\#\{i>u:\, y_i> \tilde y\}.
\]
Mood's median test statistic at split $u$ is the usual Pearson chi-square statistic for independence,
\[
M_u \;=\; \sum_{a\in\{1,2\}}\sum_{b\in\{-,+\}} \frac{\bigl(N_a^b(u)-E_a^b(u)\bigr)^2}{E_a^b(u)},
\]
where $E_a^b(u)$ are the expected counts under the null hypothesis that the two subsegments share the same distribution (with respect to the pooled median). 
We then define the scan-based validity (or cost) as $f_M(y_{s..t}) \;=\; \max_{u=s+1,\dots,t-1} M_u$. When extending the segment from $t$ to $t+1$, the pooled median $\tilde y$ (and hence all counts $N_a^b(u)$) may change, so updating $f_M$ generally requires considering all split points $u$, which is at least $\mathcal{O}(\ell)=\mathcal{O}(t-s)$ time. This validity function is particularly robust to outliers, as it depends only on comparisons to the median rather than on magnitudes. For calibration, we use a Dunn--\v{S}id\'ak correction \cite{vsidak1967rectangular} over the $t-s-1$ candidate split points with segment-wise level $\alpha = 0.01$.

\section{Simulation study}
\label{sec:simu}

In this section, we study the performance and empirical runtimes of SVP. We start by evaluating the localization and power of a Gaussian change-in-mean with a cost-based validity test. To demonstrate the flexibility of our procedure, we then perform change-point detection in the presence of outliers using robust tests. Finally, we study the empirical runtime of our procedure in scenarios with and without changes. 

\subsection{Change-in-mean scenarios}
\label{sec:simulation}

We now assess the performance of SVP in a simulation study across four change-in-mean scenarios, illustrated in Figure~\ref{fig:scenarios}. These are standard scenarios that are useful for assessing detection performance, as previously reported in \cite{romano2022detecting}. For each experiment, we generate data of length $n=1000$ with i.i.d. Gaussian noise with a variance of $\sigma^2 = 1$. According to the structure of each scenario, we introduce changes of different jump sizes (change magnitudes) across a grid of values ranging from 0.1 to 2. For each configuration, we perform 100 replicates in which we only vary the noise.

\begin{figure}[!htbp]
    \centering
    \includegraphics[width=0.8\linewidth]{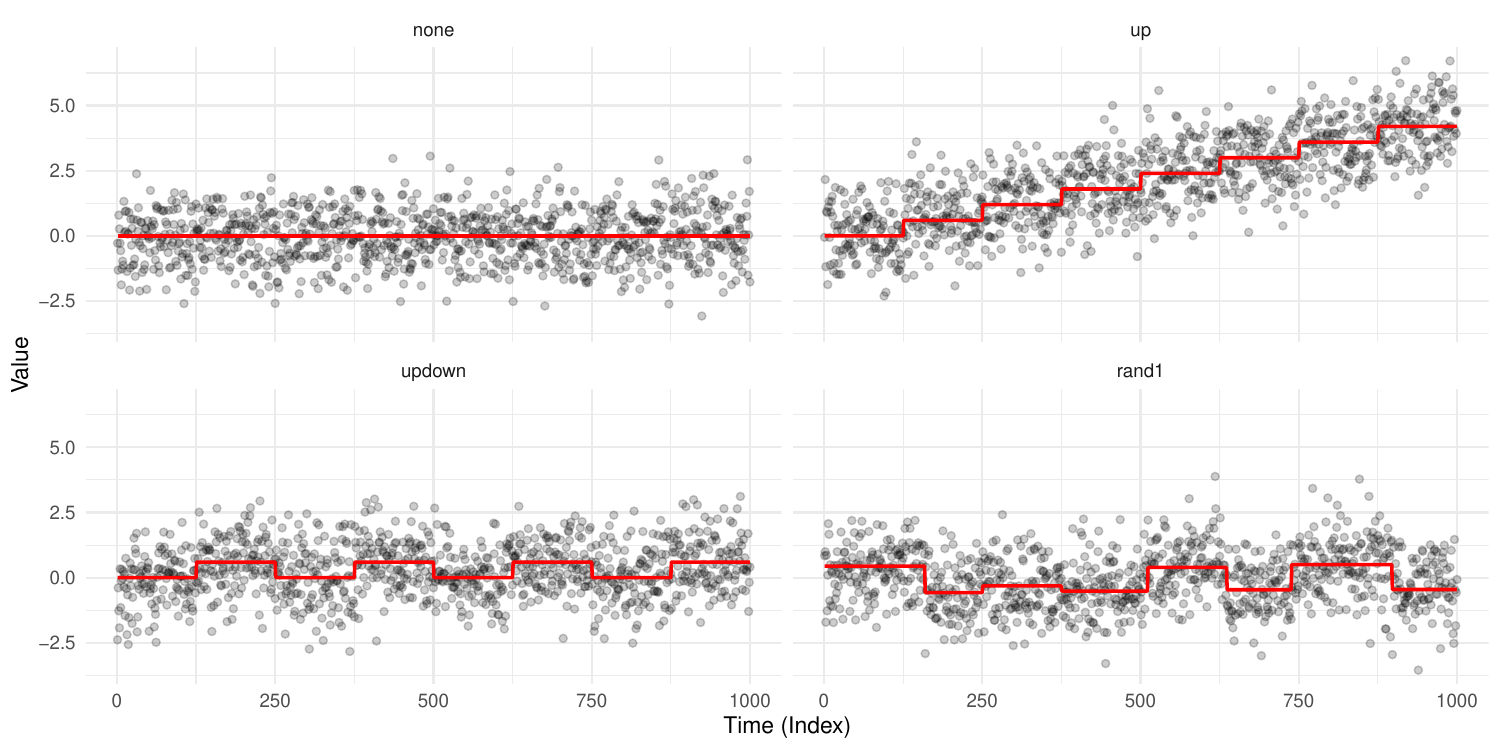}
    \caption{An illustration of 4 examples of change patterns across the simulation scenarios, with a jump size of 0.6. Across the presented simulations, we range the jump size and vary the noise across replicates.}
    \label{fig:scenarios}
\end{figure}

We compare SVP with OP in its PELT version \cite{killick2012optimal}. In particular, we compare PELT with SVP using a BIC penalty of $2 \log(n)$, and with SVP using a custom penalty of $1.5 \log(n)$ to match the PELT False positive rate in the \textit{none} scenario.

To evaluate accuracy, we use a tolerance-based matching criterion: a detected change point is counted as correct if it lies within $\pm 2.5$ observations of a true change point. From these matches, we compute the precision, recall, and the F1 score (the harmonic mean of the two). F1 scores for each scenario and jump size are displayed in Figure~\ref{fig:f1-results}, with corresponding precision and recall curves shown in Appendix~\ref{app:additional-results}, Figures~\ref{fig:pre_recall}. 

Across all scenarios, detection difficulty decreases as the jump size increases for both approaches. For small jump sizes, PELT attains slightly higher F1 scores than SVP. The corresponding change-point location distributions (Appendix~\ref{app:additional-results}, Figure~\ref{fig:change_dist}) show that PELT's detections are more concentrated around the true change points for very small change magnitudes, matching the behavior observed in the F1 curves. For moderate to large jump sizes (above approximately $0.9$), all methods become effectively equivalent across all scenarios, with F1 scores approaching one. In these regimes, both precision and recall are high and nearly identical between PELT and SVP.
In general, SVP is more conservative than PELT; with the same BIC penalty (see Proposition~\ref{prop:OP_SVP}), SVP produces fewer false positives in the \textit{none} scenario and tends to avoid marginal detections for small-jump settings.

\begin{figure}[!htbp]
    \centering
    \includegraphics[width=0.8\linewidth]{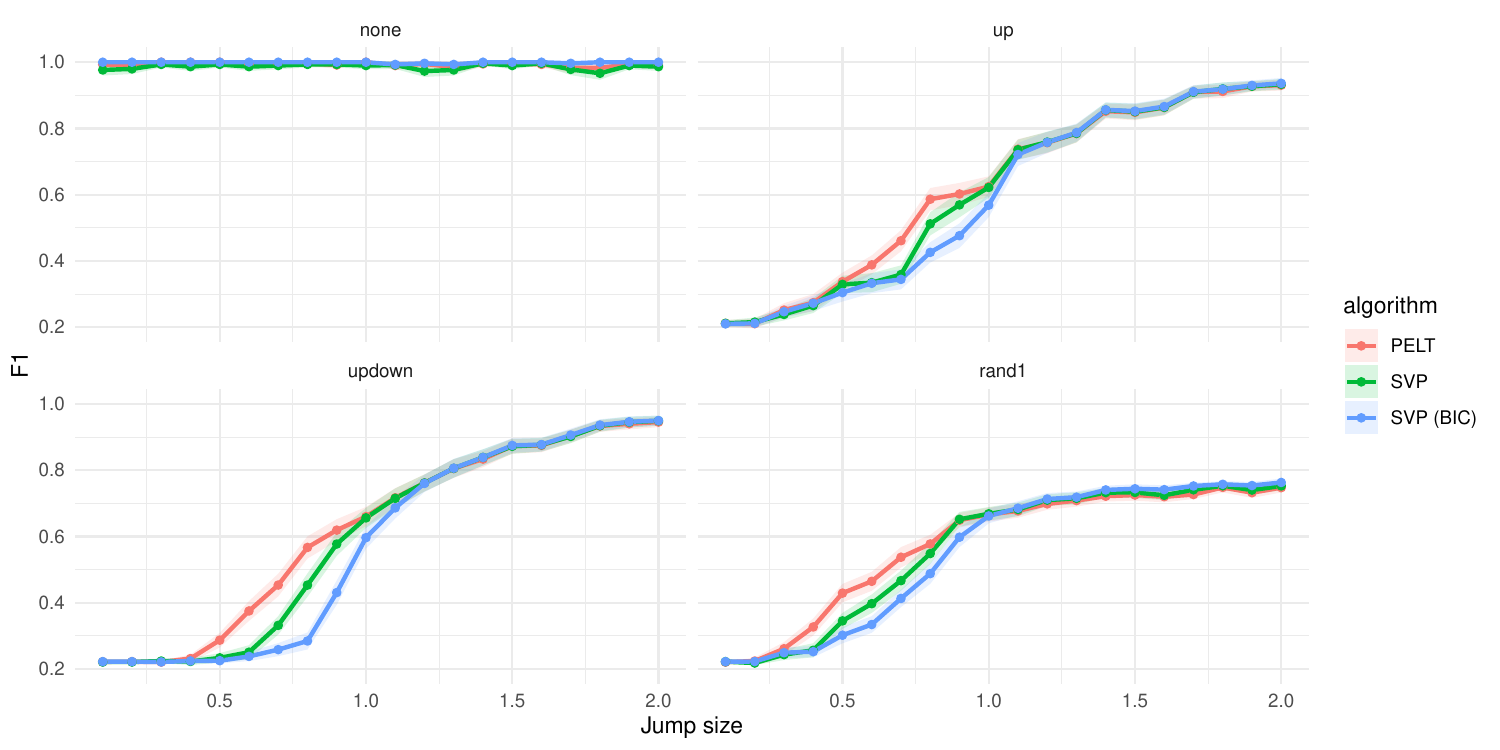}
    \caption{F1 scores across the four simulation scenarios for PELT and SVP (with both likelihood and BIC penalties) as a function of jump size. Each point represents the average over 100 replications.}
    \label{fig:f1-results}
\end{figure}

\subsection{Heavy-tailed observations}

We evaluate SVP's performance in the presence of heavy-tailed noise. We consider the same four scenarios as in Section~\ref{sec:simulation}, but now replace the Gaussian noise with Student's $t$-distribution with 2 degrees of freedom.

Since PELT assumes Gaussian observations and uses squared-error loss, it is not robust to such heavy-tailed contamination. We therefore compare SVP with RFPOP \cite{fearnhead2019changepoint}, a version of the Optimal Partitioning algorithm (like PELT), but with a cost function that is robust to outliers (the bi-weight loss). We test the two nonparametric variants of SVP introduced in Section~\ref{subsec:robust}. For RFPOP, we again use the BIC and an $\ell$-threshold of 3 for outlier detection for the bi-weight loss parameter. For the nonparametric SVP variants, we use a penalty of $1.5\sqrt{(n/K)^3/12}$ for Wilcoxon (where $K$ is the number of true segments) and a data-driven threshold for the median based on controlling the false positive rate at $\alpha = 0.01$ (see Section~\ref{subsec:robust}).

Results in terms of the F1 score are shown in Figure~\ref{fig:robust-results_f1}, along with precision, recall, and the distribution of changes in the Appendix \ref{app:additional-results}. As expected, PELT's performance degrades substantially in the presence of heavy-tailed noise for all jump sizes. RFPOP demonstrates improved robustness compared to PELT, successfully detecting changes even under heavy-tailed contamination. The nonparametric SVP variants demonstrate strong and stable robustness across all scenarios, outperforming RFPOP in the \textit{up} and \textit{none} scenarios, with the Wilcoxon being slightly better than the Mood version. RFPOP tends to produce more detections than the other robust methods (as can be seen from the recall plot in the Appendix, Figure \ref{fig:robust-results}); however, at the same time, this yields a noticeably higher false positive rate compared to the SVP variants.

\begin{figure}[!htbp]
    \centering
    \includegraphics[width=0.8\linewidth]{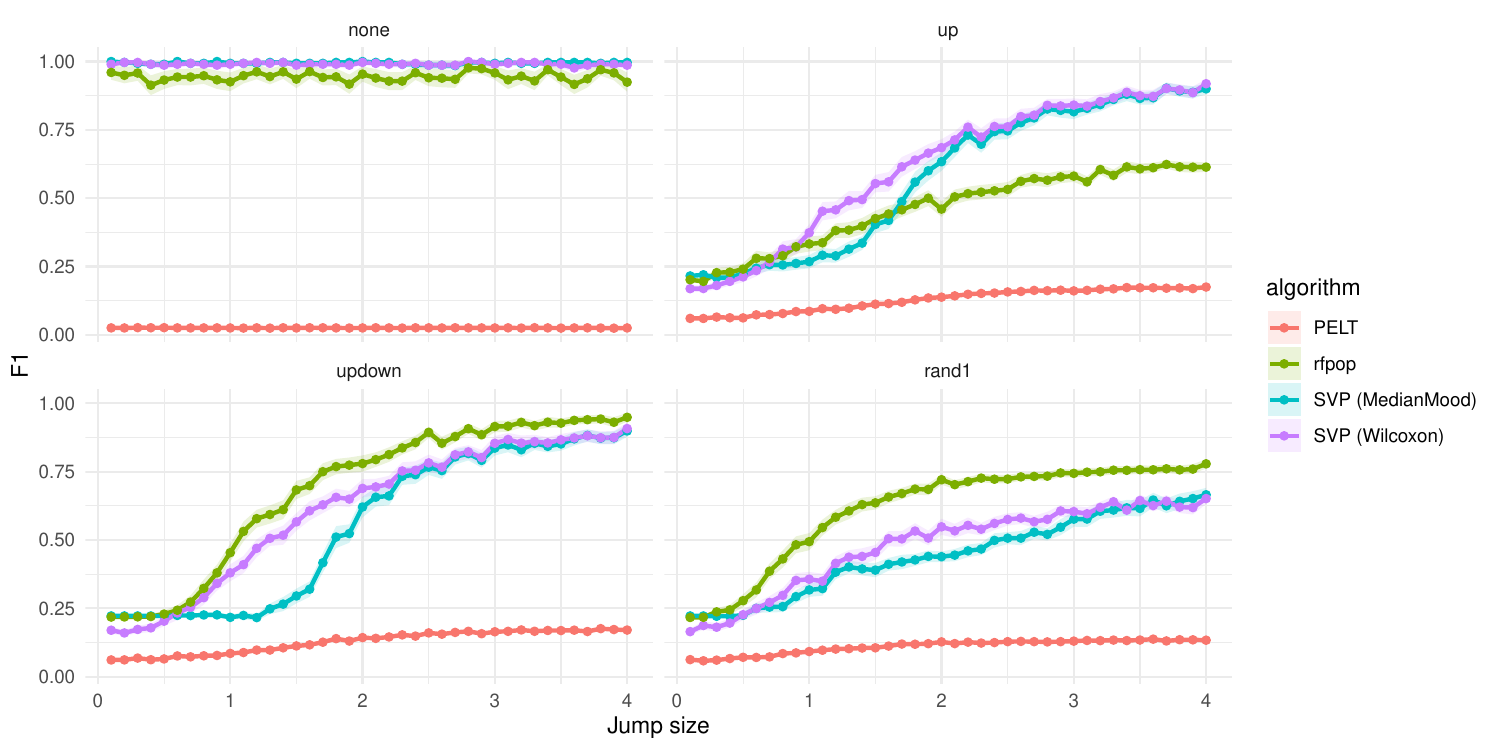}
    \caption{F1 scores for robust change-point detection methods with heavy-tailed noise, across four scenarios as a function of jump size. Each point represents the average over 100 replications.}
    \label{fig:robust-results_f1}
\end{figure}

\subsection{Runtime performance}

We now examine the computational efficiency of SVP with FOCuS cost-based validity test compared to PELT. We conduct two experiments: first, we vary the length of the time series without any change points to assess how the method scales under the null; second, we examine how runtime depends on the number of detected change points. We measure the user execution time for each method on a 13th Gen Intel i7-1370P, running Ubuntu 22.04 LTS. PELT implementation is from the official \texttt{changepoint} R package, which is based on Rcpp. 

For the first experiment, we generate sequences of i.i.d. Gaussian observations with $\sigma^2 = 1$ and no mean changes, varying the length $n$ from 1,000 to 10,000. Results are shown in Figure~\ref{fig:time_vs_n} on a log-log scale. As expected, PELT in this scenario shows close to the worst-case complexity of $\mathcal{O}(n^2)$. SVP's computational complexity exhibits a log-linear $\mathcal{O}(n \log(n))$ increase, matching the time complexity of the FOCuS algorithm. For smaller sequences, PELT results in a faster run time due to the computational overheads of SVP. 

For the second experiment, we generated data with varying numbers of equally-spaced change points and measured the runtime for each method on sequences of 10,000 observations. Figure~\ref{fig:time_vs_changes} shows the runtime as a function of the detected changes. We observe that the empirical runtime of PELT decreases as the number of changes detected increases. On the other hand, SVP shows a small runtime for no-changes, which increases with a small number of changes and then decreases again as the number of changes increases. This is consistent with the results of Proposition \ref{prop:time}.

\begin{figure}[!ht]
    \centering
    \begin{subfigure}[b]{0.48\linewidth}
        \centering
        \includegraphics[width=\linewidth]{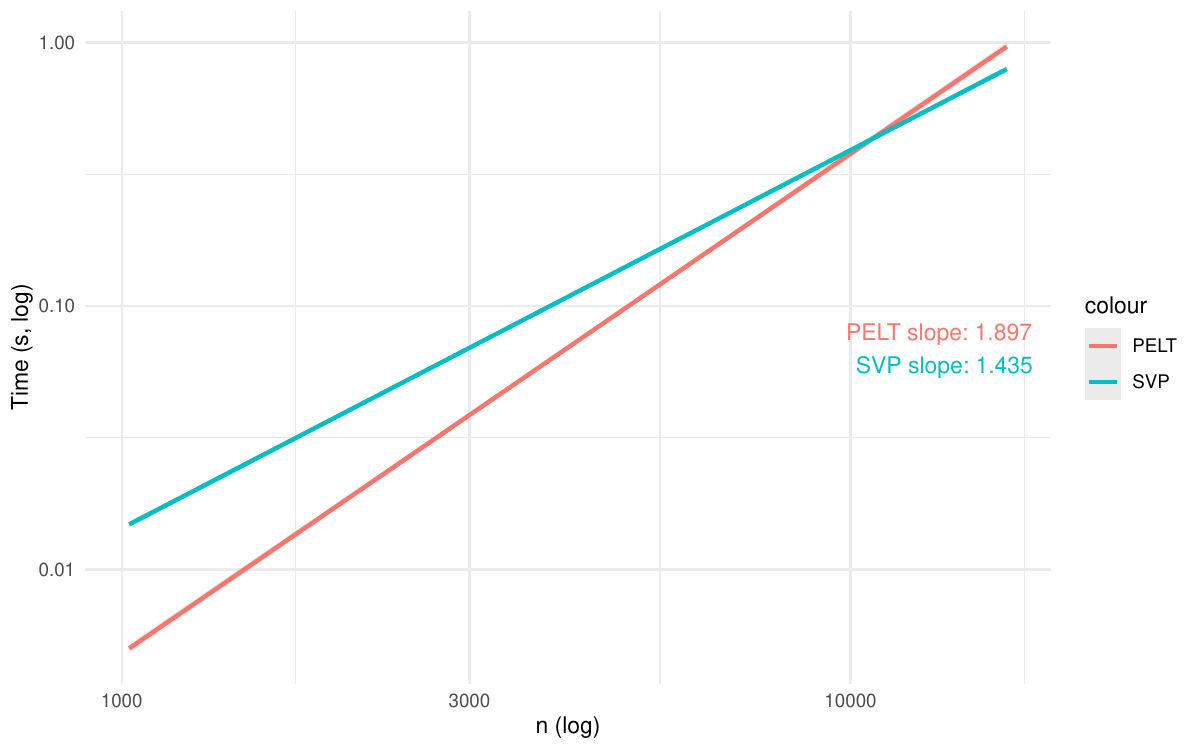}
        \caption{Runtime as a function of sequence length $n$ (no change point). Both axes are on log scale.}
        \label{fig:time_vs_n}
    \end{subfigure}
    \hfill
    \begin{subfigure}[b]{0.48\linewidth}
        \centering
        \includegraphics[width=\linewidth]{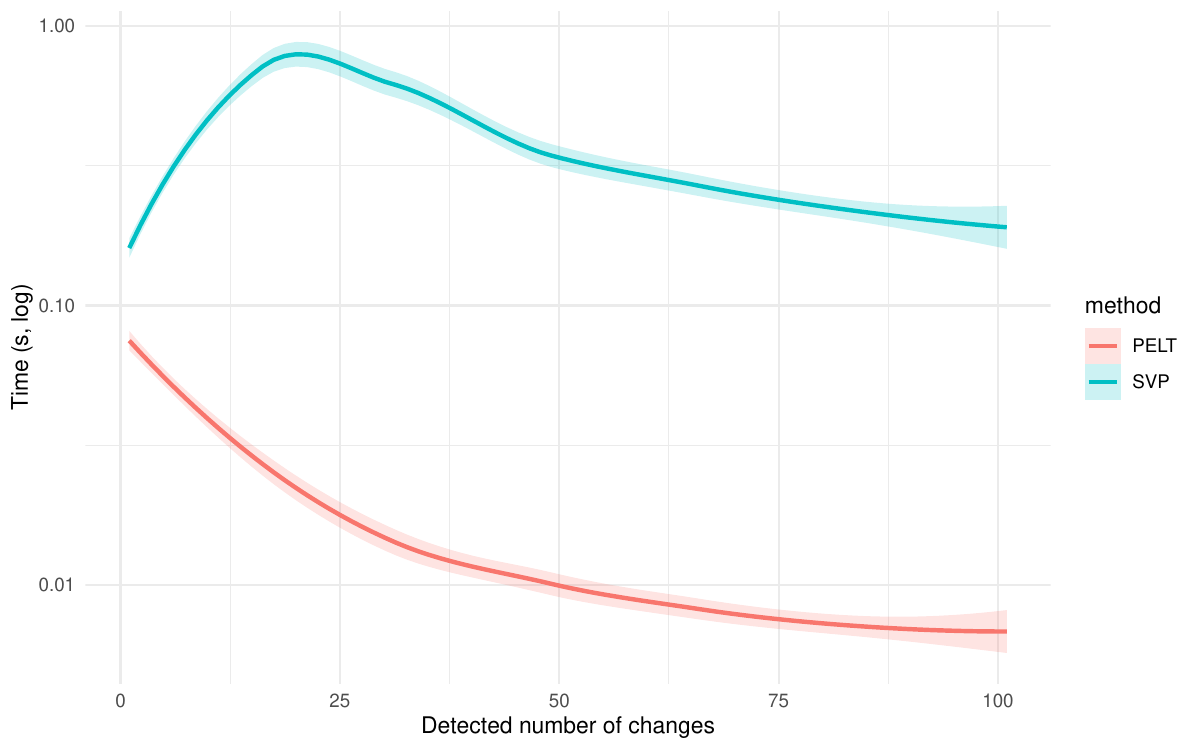}
        \caption{Runtime as a function of the number of detected change points. Y-axis is on log scale.}
        \label{fig:time_vs_changes}
    \end{subfigure}
    \caption{Computational performance comparison between SVP and PELT.}
    \label{fig:runtime-results}
\end{figure}

\section{Well-log Data Analysis}
\label{sec:data}

We compare four segmentation algorithms on a standard oil time-series data set \cite{oruanaidh1996numerical} for change-point detection, which contains many outliers. The results are shown in Figure~\ref{fig:logdata}. PELT is clearly ill-suited to this setting: it produces many short segments, apparently driven by outlying observations. By contrast, robust FPOP yields a much more meaningful partition, and the Mood–SVP method performs similarly. Compared with robust FPOP, Mood–SVP identifies the same major changes, with three additional change-points located within the longest segments.

\begin{figure}[!ht]
    \centering
    \includegraphics[width=0.85\linewidth]{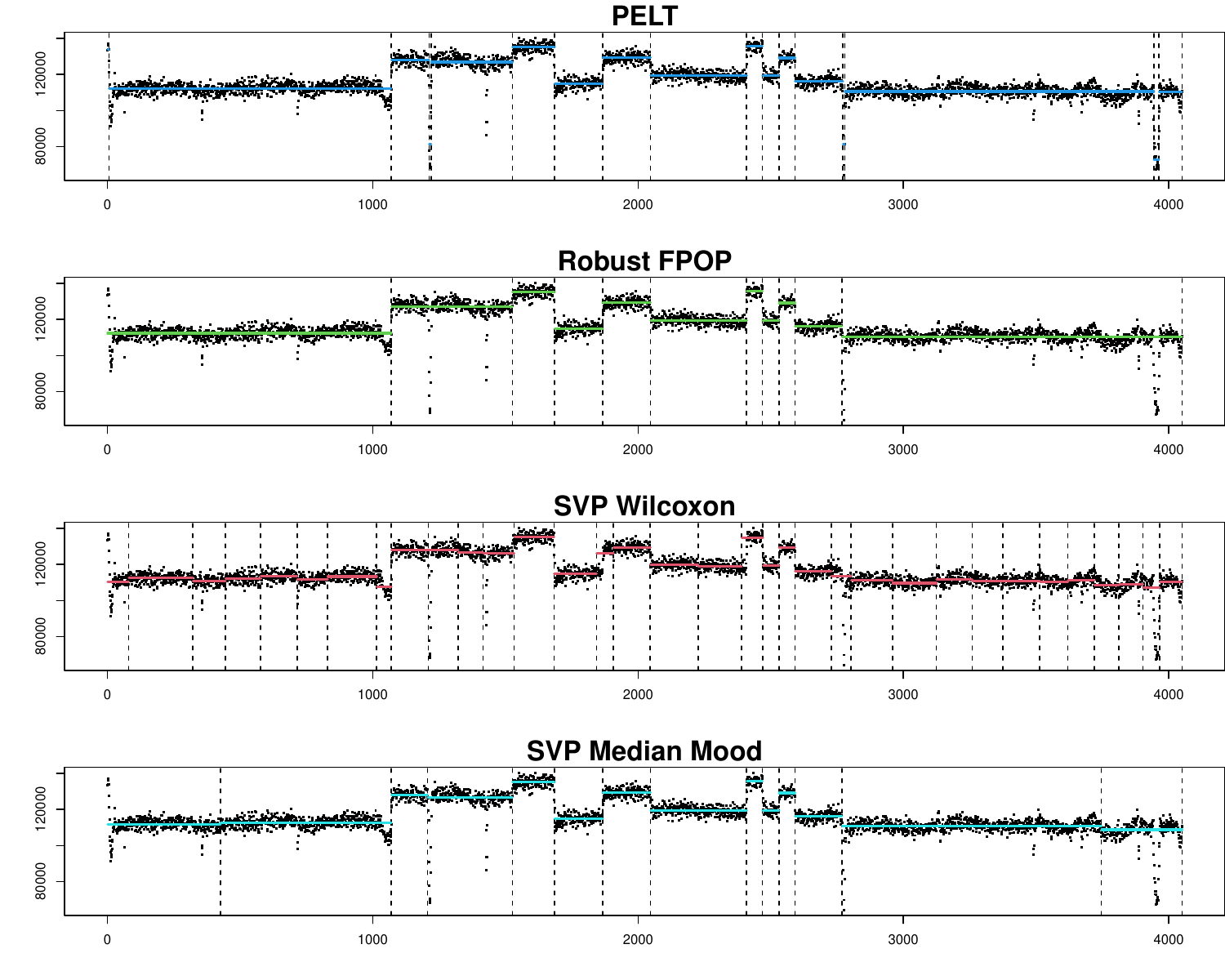}
    \caption{Well-log data segmentation. Results  with 4 different methods}
    \label{fig:logdata}
\end{figure}

\section{Conclusion}
\label{sec:conclusion}

In this work, we introduced a new optimization framework for segmenting time series under a validity constraint, thereby enforcing local, segment-scale control over the global segmentation. This is achieved through the use of a lexicographic order that favors sparser partitions while requiring every segment to satisfy the chosen validity criterion. The resulting Smallest Valid Partitioning (SVP) can be computed via dynamic programming, yielding an efficient algorithm whose time complexity ranges from linear to cubic, strongly depending on the choice of cost and validity functions. SVP’s flexible design also enables robust variants, for instance, by pairing standard costs with outlier-resistant validity tests. 

Several directions remain for improvement. First, convex-hull–based ideas -- such as those used in the FOCuS algorithm -- could be integrated to speed up validity updates. Second, our current use of local tests on each segment is still relatively crude. For instance, the test is directional, meaning that running the algorithm on the time series in reverse order $y_{n..1}$ will not necessarily yield the same segmentation. More principled ways to combine or calibrate these local decisions could substantially improve both statistical power and overall segmentation quality. This is left for future work.
\bmhead{Acknowledgments}
This work was supported by the EPSRC grant UKRI2698.

\newpage 
\appendix

\section{Proof of Proposition \ref{prop:update}}
\label{app:DP}
We denote by $R_t^s$ the optimal bi-point among segmentations such that the last change point is located at $s$ :
$$R_t^s:=\min_\preceq\left\{\left(K,\sum_{k=0}^{K-1}\mathcal C(y_{\tau_k..\tau_{k+1}})\right):f(y_{\tau_{k}..\tau_{k+1}})\le\gamma,\ \forall k,\ \tau_{K-1}=s,\ \tau_K=t\right\}.$$
We return to the definition of $R_t$ and isolate the last segment :
\begin{align*}
    R_t &= \min_\preceq\{R_t^s,\ 0\le s<t\} \\
    &= \min_\preceq\Big\{\min_\preceq\Big\{\Big(K,\sum_{k=0}^{K-1}\mathcal C(y_{\tau_k..\tau_{k+1}})\Big):f(y_{\tau_{k}..\tau_{k+1}})\le\gamma, \\
    & \quad \quad \quad \quad \quad \quad \quad \forall k,\ \tau_{K-1}=s,\ \tau_K=t\Big\},0\le s<t\Big\} \\
    &= \min_\preceq\Big\{\min_\preceq\Big\{\Big(K-1,\sum_{k=0}^{K-2}\mathcal C(y_{\tau_k..\tau_{k+1}})\Big)+(1,\mathcal C(y_{s..t})):f(y_{\tau_{k}..\tau_{k+1}})\le\gamma,\\
     & \quad \quad \quad \quad \quad \quad \quad \forall k,\ \tau_{K-1}=s,\ \tau_K=t\Big\},0\le s<t\Big\} \\
    &= \min_\preceq\Big\{\min_\preceq\Big\{\Big(K-1,\sum_{k=0}^{K-2}\mathcal C(y_{\tau_k..\tau_{k+1}})\Big):f(y_{\tau_{k}..\tau_{k+1}})\le\gamma,\\
    & \quad \quad \quad \quad \quad \quad \forall k,\ \tau_{K-1}=s\Big\}+(1,\mathcal C(y_{s..t})):f(y_{s..t})\le\gamma,\ 0\le s<t\Big\} \\
    &= \min_\preceq\Big\{R_s+(1,\mathcal C(y_{s..t})):f(y_{s..t})\le\gamma,\ 0\le s<t\Big\}
\end{align*}

\section{PELT pruning rule}
\label{app:pruning}

\begin{proposition}
A PELT-like pruning rule can be described by the following inequality. If the validity function $f$ is $\gamma^{-1}$ stable, if for $s <t$, we have:
$$(K_s, Q_s) +(0,\C(y_{s..t}))\succ (K_t,Q_t) \,.$$
we can prune solution $(K_s,Q_s)$ at further iterations $u > t$. 
\end{proposition}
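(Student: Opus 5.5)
The plan is to show that, for every later time $u>t$, the candidate built from index $s$ in the recursion \eqref{eq:DP_SVP} of Proposition~\ref{prop:update} is strictly beaten by the candidate built from index $t$. If so, $s$ can never be the (lexicographic) argmin at time $u$, and discarding it leaves every $R_u$ unchanged. Besides $\gamma^{-1}$-stability, I will rely on the usual cost assumption behind PELT-type rules (with zero constant): for $s<t<u$,
$$\C(y_{s..t})+\C(y_{t..u})\le \C(y_{s..u}).$$
This inequality holds for all negative log-likelihood costs \eqref{eq:exp}, because fitting one parameter per sub-segment can only lower the cost. I will state it explicitly as a hypothesis of the rule.

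First, the validity step. Fix $u>t$ and suppose the $s$-candidate is admissible at time $u$, i.e.\ $f(y_{s..u})\le\gamma$. The contrapositive of $\gamma^{-1}$-stability (with $s<t$) gives $f(y_{t..u})\le\gamma$. Hence $R_t+(1,\C(y_{t..u}))$ is also an admissible element of the set minimized in \eqref{eq:DP_SVP} at time $u$. If the $s$-candidate is not admissible, there is nothing to prove.

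Second, two elementary facts about $\preceq$ on $\mathbb N^\star\times\mathbb R$. (a) It is translation invariant: $A\preceq B \iff A+C\preceq B+C$, and the same holds for $\prec$. (b) It is monotone in the cost coordinate: $Q\le Q'$ implies $(K,Q)\preceq(K,Q')$. Using (b) with the cost inequality, then regrouping, then (a) with the hypothesis $R_s+(0,\C(y_{s..t}))\succ R_t$, I get the chain
$$R_s+(1,\C(y_{s..u}))\succeq R_s+(0,\C(y_{s..t}))+(1,\C(y_{t..u}))\succ R_t+(1,\C(y_{t..u})).$$
So the $s$-candidate is strictly dominated by an admissible candidate at every $u>t$. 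It therefore never attains the minimum, and the argmin and the value $R_u$ are unchanged when $s$ is removed from the candidate set $\tau$ of Algorithm~\ref{algo:svp}. Since this holds for every $u>t$, the pruning is permanent.

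The main obstacle is making the monotonicity/translation argument airtight when the numbers of segments differ. If $K_s<K_t$, the hypothesis cannot hold, because then $R_s+(0,\C(y_{s..t}))\prec R_t$. So the hypothesis forces either $K_s>K_t$, or $K_s=K_t$ together with a strict cost comparison. In both cases the strict inequality survives translation by $(1,\C(y_{t..u}))$, so no separate case analysis is really needed; I will only check this carefully. This also explains the remark after Proposition~\ref{prop:time}: the rule effectively compares solutions with the same number of segments, which limits how much it can prune.
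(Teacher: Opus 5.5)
Your proof is correct and follows the paper's argument essentially line for line. It uses the same chain $R_s+(1,\C(y_{s..u}))\succeq R_s+(0,\C(y_{s..t}))+(1,\C(y_{t..u}))\succ R_t+(1,\C(y_{t..u}))$, and it invokes $\gamma^{-1}$-stability for the same purpose: to ensure the $t$-candidate is admissible whenever the $s$-candidate is. Your explicit hypothesis $\C(y_{s..t})+\C(y_{t..u})\le\C(y_{s..u})$ is a worthwhile addition, since the paper uses it silently in the first step and it fails for some costs the paper mentions, such as the range cost.
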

The sign $\succ$ means that the left-hand side element is strictly greater than the right-hand side element in lexicographic order. 
\begin{proof}
We consider the index $u>t$, we have :
\begin{align*}
(K_s,Q_s)+(1,\C(y_{s..u})) &\succeq (K_s,Q_s)+(1,\C(y_{s..t})+\C(y_{t..u}))\\   
&= (K_s,Q_s)+(0,\C(y_{s..t}))+(1,\C(y_{t..u}))\\
&\succ (K_t,Q_t)+(1,\C(y_{t..u}))\,.
\end{align*}
Index $s$ cannot be optimal in such a situation, as the solution with index $t$ is always better under the PELT condition. Where is the $\gamma^{-1}$ stability used? If the segment $y_{t..u}$ is invalid and not used in lexicographic order, we also need to have $y_{s..u}$ unused. This is precisely the definition of the $\gamma^{-1}$ stability.
\end{proof}

An FPOP-like pruning rule could be described by the functional relation
$Q_t^s(\theta) = \min \Big\{ q^{s'}_t(\theta)\,,\, s' \ge s\,,\, K_s = K_t\Big\}$ where $q^{s'}_t(\theta) = Q_{s'} + \C(y_{s'..t}; \theta)$.

\section{Additional simulation results}\label{app:additional-results}

We report here additional complementary figures in support of the main empirical simulation study.

\begin{figure}[!htbp]
    \centering
    \begin{subfigure}[b]{\linewidth}
        \centering
        \includegraphics[width=0.8\linewidth]{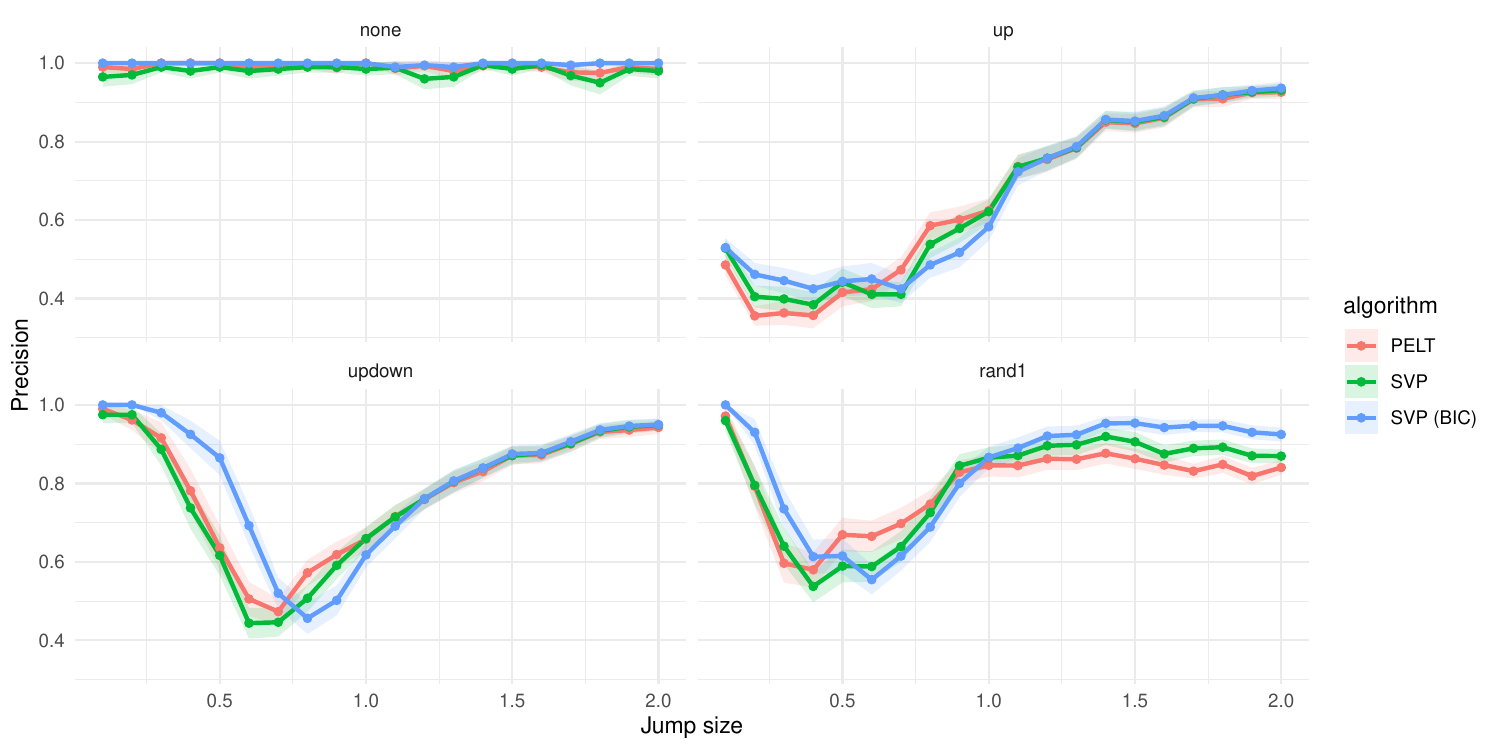}
        \caption{Precision}
    \end{subfigure}

    \begin{subfigure}[b]{\linewidth}
        \centering
        \includegraphics[width=0.8\linewidth]{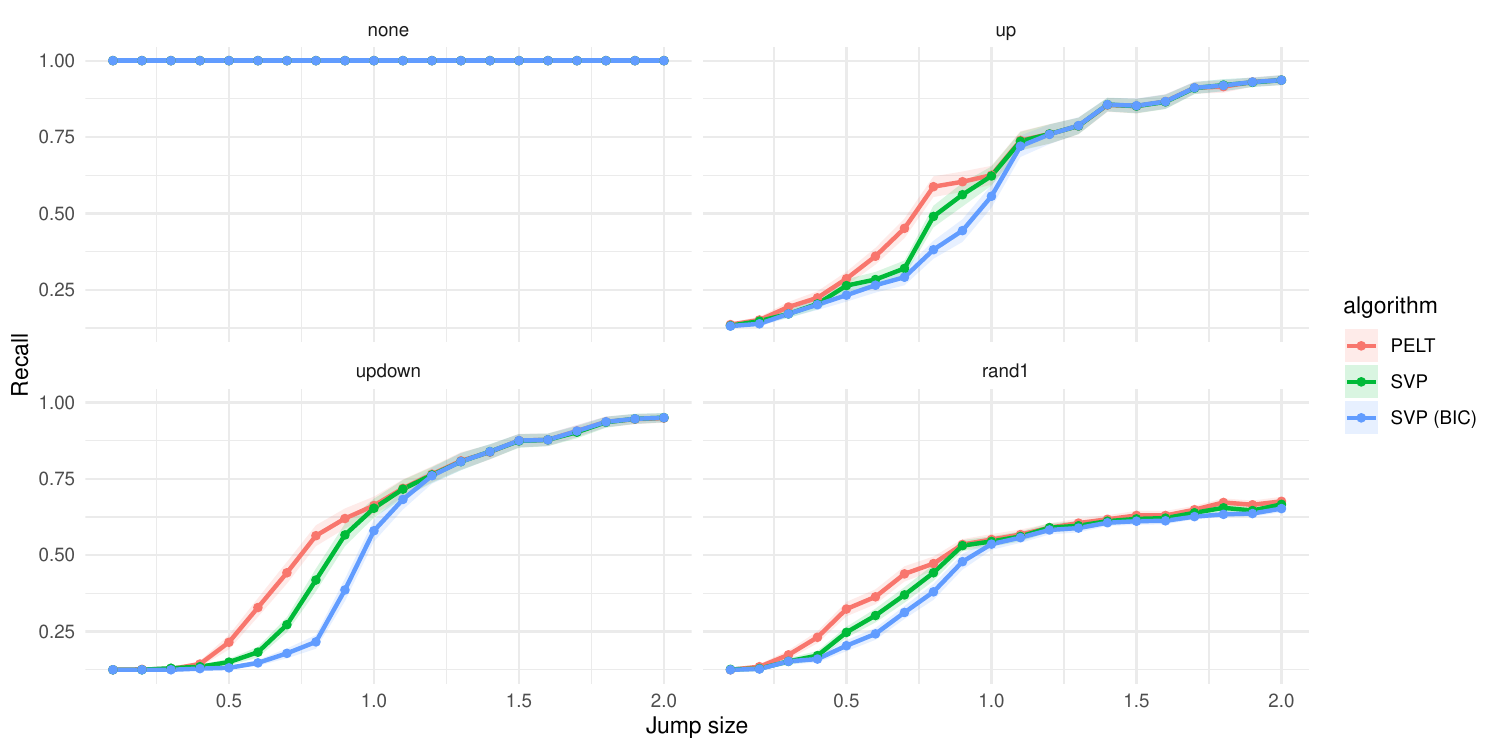}
        \caption{Recall}
    \end{subfigure}
    \caption{Performance metrics for the change-in-mean case, across four scenarios as a function of jump size. Each point represents the average over 100 replications.}
    \label{fig:pre_recall}
\end{figure}

\begin{figure}[!htbp]
    \centering
    \begin{subfigure}[b]{\linewidth}
        \centering
        \includegraphics[width=0.8\linewidth]{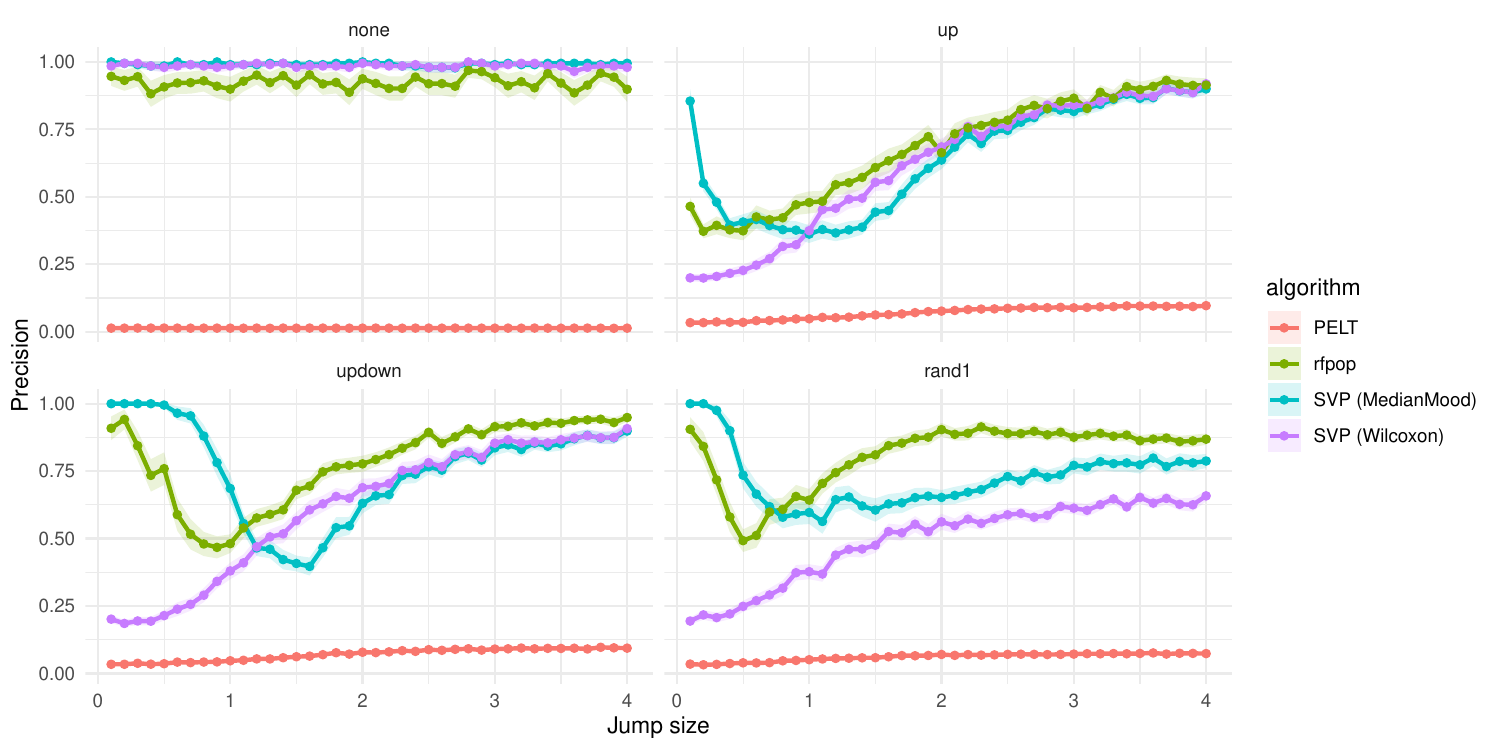}
        \caption{Precision}
        \label{fig:robust_precision}
    \end{subfigure}
    \begin{subfigure}[b]{\linewidth}
        \centering
        \includegraphics[width=0.8\linewidth]{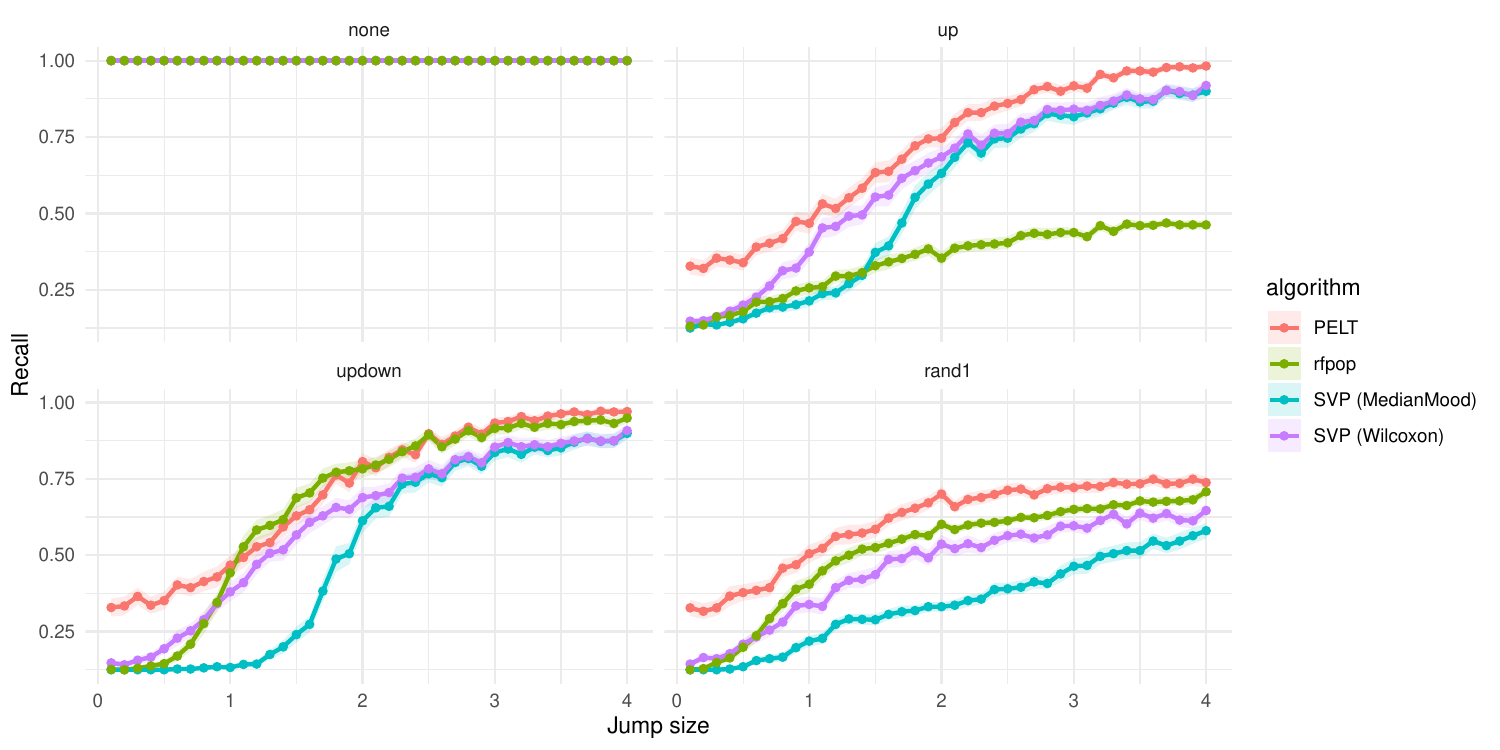}
        \caption{Recall}
        \label{fig:robust_recall}
    \end{subfigure}
    \caption{Performance metrics for robust change-point detection methods with heavy-tailed noise, across four scenarios as a function of jump size. Each point represents the average over 100 replications.}
    \label{fig:robust-results}
\end{figure}

\begin{figure}[!ht]
    \centering
    \includegraphics[width=0.75\linewidth]{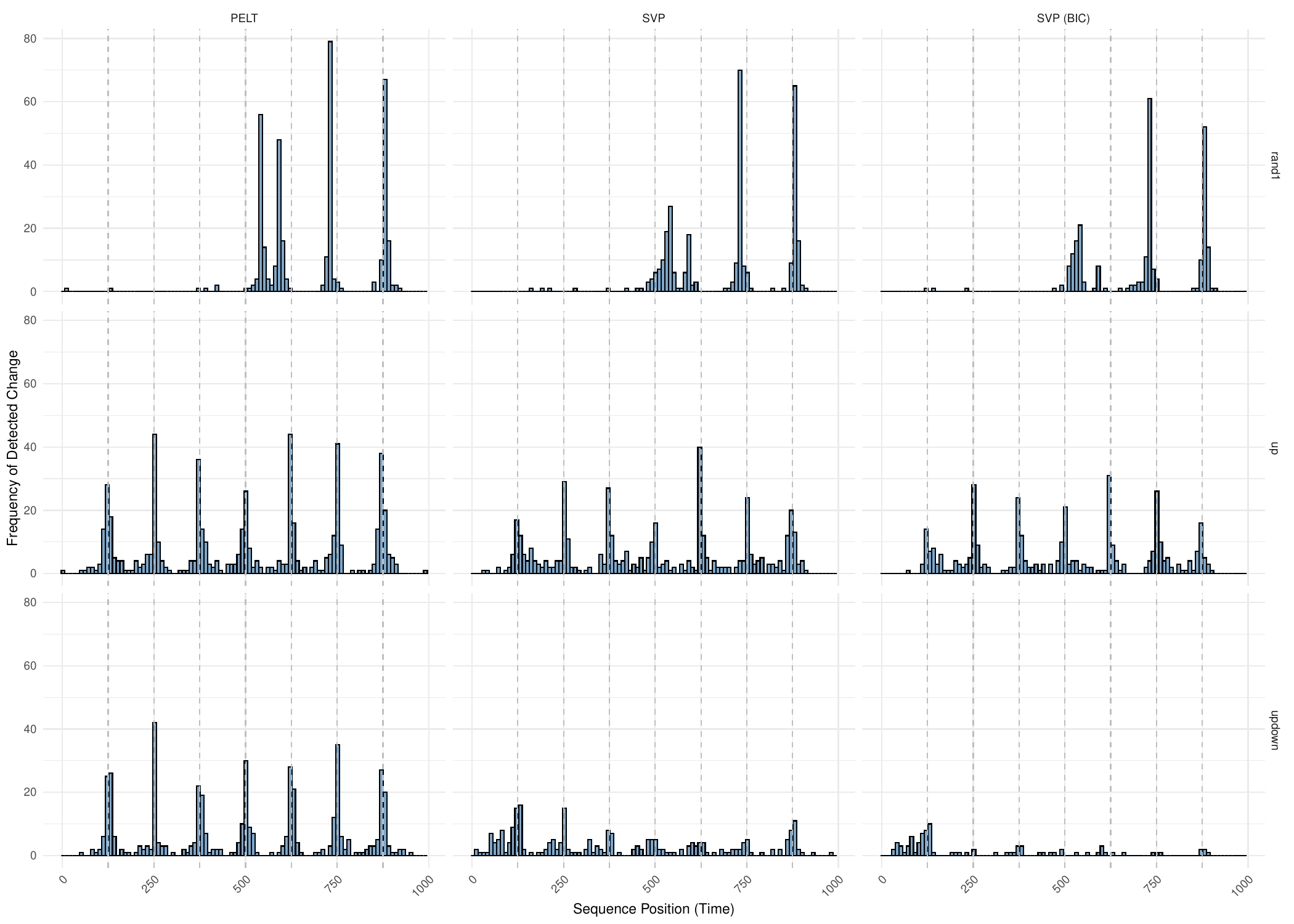}
    \caption{Distribution of all detected changes for the change-in-mean study across all replicates in the $0.6$ change magnitude scenario.}
    \label{fig:change_dist}
\end{figure}

\begin{figure}[!ht]
    \centering
    \includegraphics[width=0.75\linewidth]{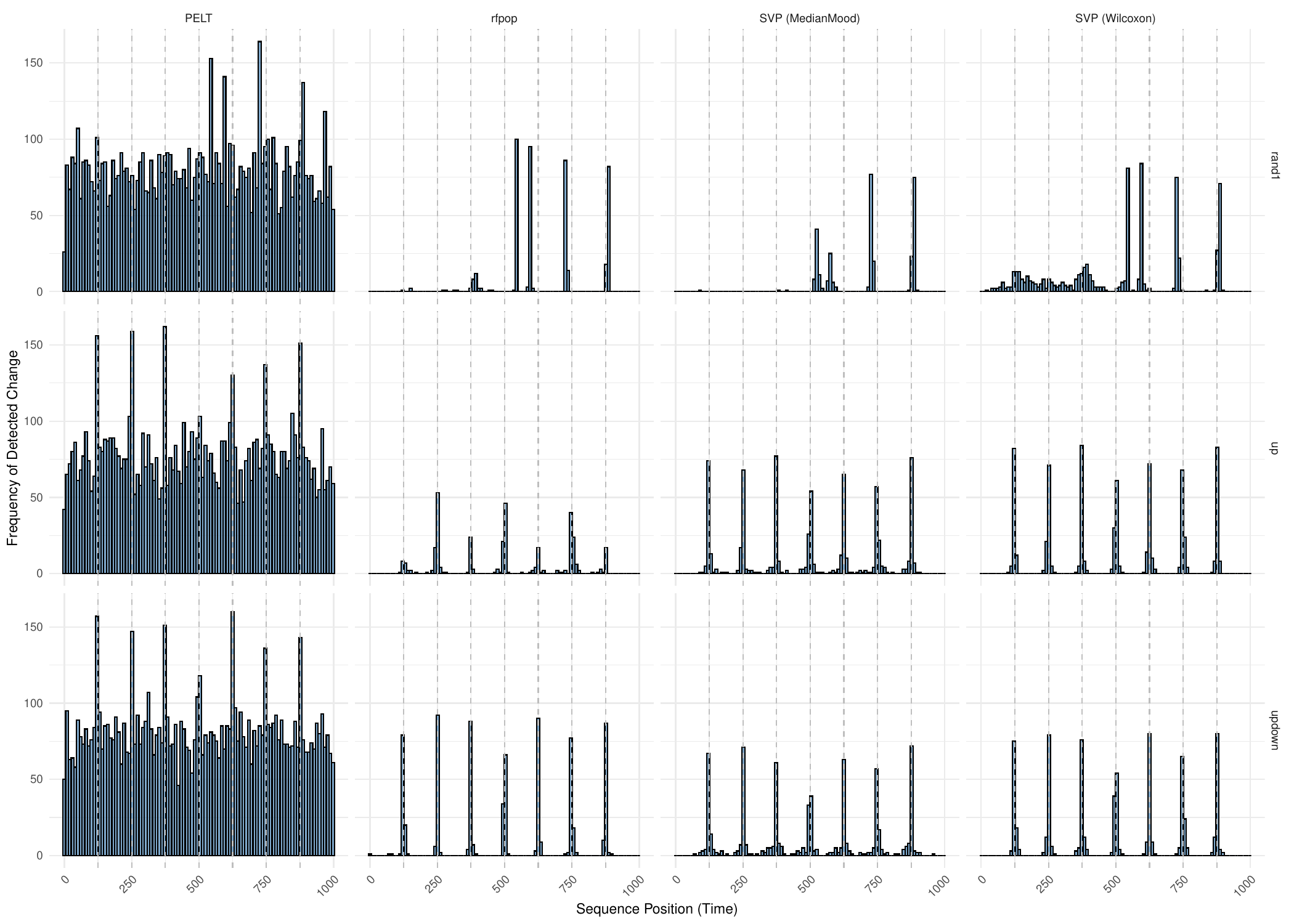}
    \caption{Distribution of all detected changes for the robust heavy-tail study across all replicates in the $2$ change magnitude scenario.}
    \label{fig:change_dist_rob}
\end{figure}

\newpage 
~\\
\newpage 

\bibliography{sn-bibliography}

\end{document}